
\documentclass[11pt, a4paper]{amsart}
\usepackage[dvips]{epsfig}
\usepackage{amsmath}
\usepackage{amssymb}
\usepackage{amsfonts}
\usepackage{amsthm}
\usepackage{amsbsy}
\usepackage{amsgen}
\usepackage{amscd}
\usepackage{amsopn}
\usepackage{amstext}
\usepackage{amsxtra}

\newtheorem{theorem}{Theorem}[section]
\newtheorem{proposition}[theorem]{Proposition}
\newtheorem{lemma}[theorem]{Lemma}
\newtheorem{corollary}[theorem]{Corollary}

\theoremstyle{definition}
\newtheorem{definition}{Definition}
\theoremstyle{remark}
\newtheorem{remark}[theorem]{Remark}

\newcommand {\R} {\mathbb{R}}

\newcommand {\E} {\mathbb{E}}

\newcommand {\C} {\mathbb {C}}

\newcommand {\ZL} {{\mathcal Z}^{L}}
\newcommand {\ZS} {{\mathcal Z}^{S}}

\newcommand {\Z} {{\mathcal Z}}
\newcommand {\N} {{\mathcal N}}
\newcommand {\intersect}{\mathcal{I}}
\newcommand {\numzeros} {\tilde{\mathcal I}}

\title[Counting nodal lines]{Counting open nodal lines of random waves on planar domains}
\author{John A. Toth and Igor Wigman}

\address{Department of Mathematics and
Statistics, McGill University, 805 Sherbrooke Str. West, Montr\'eal
QC H3A 2K6, Ca\-na\-da.} \email{jtoth@math.mcgill.ca} \thanks{The first author was partially supported by NSERC grant \# OG0170280 and a William Dawson Fellowship \\ The second author is supported by a CRM-ISM Fellowship}

\address{Centre de recherches math\'ematiques (CRM),
Universit\'e de Montr\'eal C.P. 6128, succ. centre-ville Montr\'eal,
Qu\'ebec H3C 3J7, Canada}
\email{wigman@crm.umontreal.ca}\begin{document} \maketitle

\begin{abstract}

We compute the asymptotic expectation of the number of open nodal
lines for random waves on smooth planar domains. We find that for both the long energy window $[0,\lambda]$, and the short
one $[\lambda,\lambda+1]$, the expected number of open nodal lines is proportional to $\lambda,$ asymptotically as $\lambda \rightarrow \infty.$  Our results are consistent with the predictions of Blum, Gnutzmann and Smilansky \cite{BGS} in the physics
literature.
\end{abstract}

\section{Introduction}

Let $\Omega \subset {\mathbb R}^{2}$ be a smooth planar domain and consider the Dirichlet problem
\begin{equation} \label{dirichlet}
-\Delta \varphi_{j} = \lambda_j^{2} \varphi_j;
\,\,\,\,\varphi_j|_{\partial \Omega} = 0.
\end{equation}
Here, we assume that $\int_{\Omega} \varphi_i \varphi_j dx = \delta_j^{i}$. Consider the zero set of the $j$-th Dirichlet
eigenfunction (with the boundary excluded)
$${\mathcal N}_{j} := \{ x \in \Omega; \varphi_{j}(x) = 0 \} - \partial \Omega.$$
 We call the set ${\mathcal N}_{j}$  the {\em nodal set} of the eigenfunction $\phi_j$. This set is a curve which in general has self intersections. However, for generic domains \cite{U}, a nodal set is a union of connected components consisting of closed loops
homeomorphic to circles and {\em open nodal lines} homeomorphic to open
intervals.  In this paper, we compute  the
average number of open nodal lines  for a  random linear combination of
 Dirichlet eigenfunctions in various spectral intervals. This problem is generically equivalent to counting the zeros of the boundary traces of the eigenfunctions.
 The latter problem has obvious similarities
to measuring the length of the interior nodal line, and our
results show that the order of magnitude is the same. We recall
that  S. T. Yau conjectured that in all dimensions, the
hypersurface volume should satisfy  $c \lambda_j \leq
\mathcal{H}^{n-1}(\mathcal{N}_j) \leq C \lambda_j$ for some positive
constants $c, C$ depending only on $(M, g)$ \cite{Y1,Y2}. The
lower bound was proved in dimension two for smooth domains by
Br\"uning-Gomes  \cite{BG} and both the upper and lower bounds
were proved in all dimensions for analytic $(M, g)$ by
Donnelly-Fefferman \cite{DF,DF2}.  Recently, there have been important
advances in understanding other aspects of the geometry of  nodal sets for large
eigenvalues, including asymptotic results for the expected number of
nodal domains for random spherical harmonics \cite{NS}, as well as
for the distribution of nodal sets on general Riemannian manifolds
without boundary \cite{Z2}.

To state our results, we define the random linear combination of
eigenfunctions (i.e. {\em random waves}) corresponding to the long range energy window
\begin{equation}
\label{eq:PhiaL def} \Phi_{a}^{L}(x;\lambda):= \sum_{\lambda_j \in
[0,\lambda]} a_j \varphi_{j}(x),
\end{equation}
and the random combination corresponding to the short range energy
window
\begin{equation}
\label{eq:PhiaS def} \Phi_{a}^{S}(x;\lambda) = \sum_{\lambda_j \in
[\lambda, \lambda +1]} a_{j} \varphi_{j}(x),
\end{equation}
where in both cases, $a_{j}$ are $(0,1)$ Gaussian i.i.d defined on
the sample space
$(\R^{N^{L,S}(\lambda)};e^{-\|a\|^2/2}\frac{da}{(2\pi)^{N^{L,S}(\lambda)/2}
})$ and
\begin{equation}
\label{eq:spec fnc long short def} N^{L}(\lambda)
:=\#\{\lambda_{k}\in [0,\lambda] \} ;\; \; N^{S}(\lambda):=
\#\{\lambda_{k}\in [\lambda,\lambda+1] \} .
\end{equation}

The nodal set of the random wave $\Phi_{a}^{L,S}$ is by definition the curve
\begin{equation*}
\mathcal{N}_{a}^{L,S}(\lambda) = \{  x\in \Omega ; \Phi^{L,S}_{a}(x;\lambda)
= 0 \}- \partial \Omega.
\end{equation*}
We are interested here in computing the asymptotics of the number of
intersections of the nodal set with the boundary, $\partial
\Omega$. Let
$$\intersect^{L,S}_{a}(\lambda) = \text{card} \,  ( \bar{\mathcal{N}}_{a}^{L,S} (\lambda) \cap \partial \Omega ).$$
Since for generic domains, a nodal set is a union of closed loops
homeomorphic to circles and open nodal lines homeomorphic to open
intervals, the
intersection number $\intersect^{L,S}_{a}(\lambda)$, is therefore,
almost surely, equal to twice the number of open nodal lines of the
nodal set, $\mathcal{N}^{L,S}_{a}(\lambda)$; with probability $1$,
there are no multiple intersections with the boundary (see Lemma
\ref{lem:dbl zer prob 0}).

Given the random variables, $\intersect^{L,S}_{a}(\lambda)$, we
define the corresponding expectations
\begin{equation*}
\Z^{L,S}(\lambda) = \E\intersect^{L,S}_{a}(\lambda).
\end{equation*}
Our main result is the computation of the leading asymptotics of
$\Z^{L,S}(\lambda)$ as $\lambda \rightarrow \infty$.  For the following theorem, we need to impose a generic non-recurrence condition. We say that a point $q \in \partial \Omega$ is {\em non-recurrent} if the measure of loops at $q \in \partial \Omega$ for the associated billiard map $\beta: B^*\partial \Omega \rightarrow B^{*} \partial \Omega$  is zero. In the following, we denote the arclength of $\partial \Omega$ by $\ell(\partial \Omega)$.
\begin{theorem} \label{mainthm}
Let $\varphi_{j}; j=1,2,\ldots$ be the Dirichlet eigenfunctions of a
smooth domain $\Omega$ and asssume that all boundary points are non-recurrent.  Then, given $a_{j}; j=1,\ldots,N^{L,S}(\lambda),$  i.i.d.
$(0,1)$-Gaussian random variables, $$(i) \,\, \ZL(\lambda) =  \frac{\ell(\partial \Omega)}{\sqrt{6} \pi} \, \lambda + o(\lambda), $$  and

$$(ii) \,\, \ZS(\lambda)  = \frac{\ell(\partial \Omega)}{2\pi} \, \lambda + o(\lambda) $$
as $\lambda\rightarrow\infty$.
\end{theorem}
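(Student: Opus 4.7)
The plan is to reduce the counting of boundary intersections to counting zeros of a one–dimensional Gaussian random field on $\partial \Omega$, and then to apply the Kac--Rice formula to that field, feeding in a pointwise local Weyl law for Neumann boundary data at non-recurrent points.

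\textbf{Step 1: Reduction to a boundary Gaussian field.} Because $\varphi_j|_{\partial\Omega}=0$, at a transverse intersection point $q\in \bar{\mathcal N}^{L,S}_a(\lambda)\cap\partial\Omega$ the Neumann datum $\partial_\nu \Phi^{L,S}_a(q)$ must vanish; conversely a zero of the Neumann datum gives (generically) such an intersection. Invoking the a.s.\ absence of multiple intersections (the lemma alluded to in the introduction), one has, with probability one,
\begin{equation*}
\intersect^{L,S}_a(\lambda) \;=\; \#\{\,s\in[0,\ell(\partial\Omega)]\,:\, \psi^{L,S}_a(s)=0\,\},
\end{equation*}
where $\psi^{L,S}_a(s):=\partial_\nu \Phi^{L,S}_a(q(s))$ and $q:[0,\ell(\partial\Omega)]\to\partial\Omega$ is the arclength parameterization. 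Since the $a_j$ are i.i.d.\ standard Gaussians, $\psi^{L,S}_a$ is a centered Gaussian field on the circle, with covariance
\begin{equation*}
K^{L,S}(s,t) \;=\; \sum_{\lambda_j\in I^{L,S}} \partial_\nu\varphi_j(q(s))\,\partial_\nu\varphi_j(q(t)),
\end{equation*}
where $I^L=[0,\lambda]$ and $I^S=[\lambda,\lambda+1]$.

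\textbf{Step 2: Kac--Rice on the boundary.} For a centered $C^1$ Gaussian field on the line with non-degenerate joint distribution of $(\psi,\psi')$ at each point, the Kac--Rice formula gives
\begin{equation*}
\E\bigl[\#\{\psi=0\}\bigr] \;=\; \int_0^{\ell(\partial\Omega)} \frac{1}{\pi}\sqrt{\frac{-\partial_s\partial_t K^{L,S}(s,s)+ \bigl(\partial_s K^{L,S}(s,s)\bigr)^2/K^{L,S}(s,s)}{K^{L,S}(s,s)}}\; ds,
\end{equation*}
so everything reduces to the pointwise asymptotics of $K^{L,S}(s,t)$ and of its first and second $s,t$-derivatives on the diagonal. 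Non-degeneracy will follow from the leading-order asymptotics in Step~3, with the quantitative error bounds then used to justify the Kac--Rice formula in the standard way.

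\textbf{Step 3: Pointwise Weyl asymptotics on the boundary.} The microlocal heart of the argument is to prove that at every non-recurrent $q\in\partial\Omega$,
\begin{equation*}
\sum_{\lambda_j\leq\lambda}\partial_\nu\varphi_j(q(s))\,\partial_\nu\varphi_j(q(t)) \;\sim\; \int_0^\lambda \int_{-\pi/2}^{\pi/2} r^{3}\cos^{2}\phi \; e^{i\,r(s-t)\sin\phi}\,d\phi\,dr,
\end{equation*}
with a uniform-in-diagonal remainder, and the analogous formula (without the outer $r$-integral) for the spectral shell $I^S$. This is an off-diagonal pointwise local Weyl law for Neumann data of Dirichlet eigenfunctions; the non-recurrence hypothesis is used precisely to rule out the contributions of closed billiard loops that otherwise obstruct a pointwise two-term asymptotic. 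The $s,t$-derivatives of $K^{L,S}$ are obtained by differentiating the Fourier integrals on the right, bringing down powers of $r\sin\phi$ in the integrand.

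\textbf{Step 4: Evaluating the constants and finishing.} A direct calculation of the leading-order Gaussian integrals yields, on the diagonal,
\begin{equation*}
K^L(s,s)\sim \tfrac{\pi}{8}\lambda^4,\qquad -\partial_s\partial_t K^L(s,s)\sim \tfrac{\pi}{48}\lambda^6,\qquad \partial_sK^L(s,s)=o(\lambda^5),
\end{equation*}
and an analogous pair of identities for $K^S$ whose ratio produces $\lambda^2/4$. Plugging into the Kac--Rice integrand gives the pointwise density $\lambda/(\sqrt{6}\pi)$ in the long case and $\lambda/(2\pi)$ in the short case; integrating over $\partial\Omega$ (whose total length is $\ell(\partial\Omega)$) yields the stated leading asymptotics. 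The error terms from Step 3, controlled uniformly in $s,t$, transfer to $o(\lambda)$ errors after the Kac--Rice integration. The principal obstacle is Step 3: proving a \emph{uniform} pointwise Weyl law for Neumann data with a usable remainder, and justifying that the non-recurrence assumption suffices to suppress the caustic contributions that would otherwise destroy the clean leading asymptotic; this is where the microlocal analysis of the boundary parametrix and the billiard dynamics must be invoked carefully.
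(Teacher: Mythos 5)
Your overall strategy is the same as the paper's: identify boundary intersections with zeros of the boundary Neumann datum $V_a(\theta)=\sum_j a_j\partial_\nu\varphi_j(q(\theta))$, apply a Kac--Rice formula for this Gaussian process, and feed in pointwise Weyl asymptotics for the boundary data. The genuine gap is that your Step 3 --- which you yourself call ``the principal obstacle'' --- is precisely where all of the analytic content of the theorem lies, and you assert it rather than prove it. In the paper, the long-window diagonal sums $c_{11}^L(\theta;\lambda)=\sum_{\lambda_j\le\lambda}|v_j(\theta)|^2\sim\lambda^4/(8\pi)$ and $c_{22}^L(\theta;\lambda)=\sum_{\lambda_j\le\lambda}|\partial_\theta v_j(\theta)|^2\sim\lambda^6/(48\pi)$ come from the Seeley heat parametrix for the Dirichlet resolvent plus the Karamata Tauberian theorem; the short-window sums require true two-term asymptotics, obtained from the Ivrii--Melrose conormal expansion of the boundary restriction of the wave kernel at the diagonal, the Hadamard variational formula to identify the leading coefficient, the Hassell--Zelditch local Weyl law for boundary traces (to handle $c_{22}^S$), and a Tauberian argument in which the non-recurrence hypothesis enters. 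Most importantly, the cross term is more delicate than your sketch suggests: one-term (heat/Ces\`aro) asymptotics only give $c_{12}^L=o(\lambda^6)$, which is useless; the paper obtains $c_{12}^S(\theta;\lambda)=o(\lambda^4)$ by applying the two-term short-window asymptotics to $\sum_{\lambda_j\in[\lambda,\lambda+1]}|(\partial_\theta+1)v_j(\theta)|^2$ and completing the square, and then deduces $c_{12}^L=o(\lambda^5)$ by summing over unit windows. Your claim $\partial_sK^L(s,s)=o(\lambda^5)$ is exactly this delicate bound; extracting it by differentiating your proposed off-diagonal formula would require uniform control of the remainder after two derivatives, a statement strictly stronger than the two-term diagonal law, and no argument is offered. (Your off-diagonal asymptotic is also not needed: only the diagonal values of $K$, $\partial_s\partial_tK$ and $\partial_sK$ enter.)

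Two secondary points. First, your Kac--Rice density is mis-stated: for a centered field the density is $\frac{1}{\pi}\sqrt{c_{11}c_{22}-c_{12}^2}\,/\,c_{11}$ with $c_{22}=\partial_s\partial_tK(s,t)|_{t=s}>0$; your sign convention mixes the stationary form $-r''(0)$ with the two-variable one, and the sign of the $(\partial_sK)^2/K$ correction is wrong --- harmless here only because the cross term is negligible. Likewise your diagonal constants $\frac{\pi}{8}\lambda^4$ and $\frac{\pi}{48}\lambda^6$ are off by a factor $(2\pi)^2$ from the correct $\frac{\lambda^4}{8\pi}$, $\frac{\lambda^6}{48\pi}$; this cancels in the ratio, so the final constants $\ell/(\sqrt{6}\pi)$ and $\ell/(2\pi)$ survive, but the stated intermediate asymptotics are not correct as written. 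Second, ``justifying the Kac--Rice formula in the standard way'' glosses over two items the paper proves separately: that almost surely the zeros of $V_a$ are simple (so that the zero count equals the intersection count, via the boundary critical point lemma), and an $L^1$-in-$a$ domination of the $\epsilon$-regularized zero-counting integrals --- the paper establishes the latter by showing the number of boundary critical points of $V_\omega$ is finite and locally constant on the unit sphere of coefficients. These can be handled by citing a sufficiently strong Gaussian Kac--Rice theorem, but the hypotheses (nondegeneracy of the one-point distributions, a.s.\ simplicity) still have to be verified, and nondegeneracy itself rests on the asymptotics of Step 3.
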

\begin{remark} The non-recurrence assumption in Theorem \ref{mainthm} holds for generic boundaries and, in particular, for all convex analytic domains \cite{Z1}.
\end{remark}

The result in Theorem \ref{mainthm} is related to well-known results
of Berard \cite{Be} on the expected nodal lengths of random
superpositions of eigenfunctions and Berry \cite{Berry 2002}, who
computed the expected length of nodal lines for isotropic,
monochromatic random waves in the plane (eigenfunctions of the  free
Laplacian). He also gave a somewhat heuristic argument for computing
the asymptotics of the variance.  Of more direct relevance is the recent result by Zelditch \cite{Z2} on the expected nodal distribution of random waves on compact manifolds without boundary. One can naturally view
Theorem \ref{mainthm} as the analogue  for domains with boundary.
Blum, Gnutzmann and Smilansky ~\cite{BGS}\footnote{We  thank
Zeev Rudnick for pointing out this reference} have studied the
distribution of the number of boundary intersections
$\tilde{\nu_{j}}$ of the nodal set of $\varphi_{j}$ (in addition to
the number of nodal domains). Using Berry's random wave to model
$\phi_{j}$ for chaotic systems, they found that for large eigenvalues, $\tilde{\nu_{j}}$ should concentrate around
$\frac{\ell}{2\pi}\cdot \lambda_{j}$, where $\ell$ is the length of the
boundary $\partial\Omega$. Numerical results for eigenfunctions of both Sinai and stadium billiards support this prediction. Since for general domains, the average nodal asymptotics over spectral intervals $[\lambda, \lambda +1]$ should be the same as for individual eigenfunctions of ergodic billiards, our asymptotic result for ${\mathcal Z}^{S}(\lambda)$ in Theorem \ref{mainthm} is thus consistent with \cite{BGS}.

Recently, in the case of piecewise-analytic domains, Toth and Zelditch \cite{TZ1} have proved deterministic upper bounds for the
number of nodal intersections with the boundary $\partial \Omega$
(and more general interior curves) for {\em individual }
eigenfunctions. In work in progress, when  $\Omega$ is an ergodic
billiard, these authors have also proved some asymptotic results for
the nodal (and critical point) distributions of complexified
restrictions of Dirichlet and Neumann eigenfunctions along strictly
convex, real-analytic interior curves $C \subset \Omega$. In this
case,  at least for an ergodic sequence of eigenfunctions, the
number of {\em complex} zeros of the holomorphic continuations of
the eigenfunction restrictions, $\phi_j|_{C},$ is $\sim c  \lambda.$
The same result is likely true when $C = \partial \Omega$  and this
would of course be consistent with the random result in Theorem
\ref{mainthm}.

In examples like the torus or sphere, where the spectrum of $\Omega$
is degenerate with high multiplicity, rather than summing
eigenfunctions belonging to different eigenspaces, it is natural to
consider the ensemble of {\em random eigenfunctions} attached to
{\em fixed} eigenspaces. A natural way to do so is to fix a basis
$B=\{\eta_{1}, \ldots \eta_{\mathcal{N}} \}$ of the eigenspace
$\mathcal{E}_{\lambda}$ and consider the random ensemble of
functions on $\Omega$ defined by
\begin{equation*}
\eta=a_{1}\eta_{1}+\ldots +a_{\mathcal{N}}\eta_{\mathcal{N}},
\end{equation*}
where $a_{i}$ are standard Gaussian i.i.d. Note that the probability
density of $\eta$ is independent of the choice of the basis $B$.

Berard \cite{Be} computed the expected length for the nodal line of
a random eigenfunction on the sphere to be $const\cdot \lambda$.
Rudnick and Wigman ~\cite{RW} and Wigman ~\cite{W} have studied the
variance of the length of the nodal line of random eigenfunctions
with $\lambda\rightarrow\infty$, for the torus and the sphere
respectively.

Recently, Granville and Wigman ~\cite{GW} have determined the
asymptotics of the variance of number of zeros of random
trigonometric polynomials of degree $\sim \lambda$ and moreover,
proved a central limit theorem for their distribution.   While there are
clearly similarities between the boundary traces of \eqref{eq:PhiaL
def} and \eqref{eq:PhiaS def} and random trigonometric polynomials
of degree $\sim \lambda$, it  would likely be difficult to prove a
central limit theorem for nodal distributions of random waves  on arbitrary smooth domains. However,   we do hope to
study the variance and higher moments in future work.

We thank Zeev Rudnick for many helpful discussions about random
zeros,  Steve Zelditch for helpful comments regarding pointwise
Weyl laws at the boundary and the anonymous referee for helpful comments and suggestions. The second author would also like to thank the CRM
analysis laboratory and its members for their support.

\section{A preliminary lemma on nodal intersections with $\partial\Omega$}
As above, we let $\Omega$ be a smooth bounded domain in ${\mathbb
R}^{2}$ and let $\varphi_j$ be the $L^2$ normalized eigenfunction of
the Laplacian with eigenvalue $\lambda_j^{2}$ satisfying the
Dirichlet boundary conditions. Without loss of generality, we assume
that the eigenfunctions are real-valued and we  let  $$q:[0,\ell] \rightarrow
\partial \Omega, \, \theta \mapsto q(\theta) = (q_1(\theta),
q_2(\theta))$$ be the arclength parametrization of the boundary
$\partial \Omega$ with $\ell:= \ell(\partial\Omega)$.

Let $v_j:[0,\ell] \rightarrow {\mathbb R}$ be the boundary trace of
$\varphi_{j}$, that is,
\begin{equation} \label{normal}
v_j(\theta) := \partial_{\nu} \varphi_j (q(\theta)).
\end{equation}
Here, $\nu = \nu(q)$ denotes the unit outward-pointing normal to the
boundary. Recall that the main object of our interest are the real
valued  random variables $\Phi_{a}^{L}(x;\lambda)$ and
$\Phi_{a}^{S}(x;\lambda)$ (see \eqref{eq:PhiaL def} and
\eqref{eq:PhiaS def}).

We would like to compute the leading asymptotics of
the expectation ${\mathcal Z}^{L,S} (\lambda) = \E
\intersect^{L,S}_{a}(\lambda)$ and we do this by  counting the zeros of the
boundary traces of $\Phi^{L}_{a}$ and $\Phi^{S}_{a}$ defined by
\begin{equation}
\label{eq:VaL def} V_{a}^{L}(\theta;\lambda) = \sum\limits_{
\lambda_j \in [0,\lambda] } a_j v_j(\theta),
\end{equation}
and
\begin{equation}
\label{eq:VaS def} V_{a}^{S}(\theta;\lambda) = \sum\limits_{
\lambda_j \in [\lambda,\lambda+1] } a_j v_j(\theta).
\end{equation}

The functions $V_{a}^{L,S}(\cdot;\lambda)$ are useful, since, as we
show in the next section, their zeros correspond to the
intersections of the nodal sets ${\mathcal N}_{a}^{L,S}(\lambda)$ with
the boundary $\partial \Omega$, for a {\em generic} choice of $a$.
In particular, we will show that the number
$\intersect^{L,S}_{a}(\lambda)$ of the intersections of the nodal line
${\mathcal N}_{a}^{L,S}(\lambda)$ with the boundary equals the number of
the zeros of $V_{a}^{L,S}(\theta;\lambda)$, almost surely (see Lemma
\ref{hopf}). This observation implies that
\begin{equation}
\label{eq:Z=ELLL} \mathcal{Z}^{L,S}(\lambda)
:=\E[\intersect^{L,S}_{a}(\lambda)]=\E[\tilde{\intersect}^{L,S}_{a}(\lambda)],
\end{equation}
where
\begin{equation*}
\tilde{\intersect}^{L,S}_{a}(\lambda) = \#\{\theta\in [0,\ell]:\:
V_{a}^{L,S} (\theta;\lambda) = 0 \}.
\end{equation*}

Our main interest here  is in  the asymptotic behaviour of
$\Z^{L,S}(\lambda)$, as $\lambda\rightarrow\infty$. In both cases we
reduce the problem of counting the nodal intersections to counting
the number of zeros of random functions, for which we use the
Kac-Rice formula. There are related formulas that have been used by
several authors in different settings: In a series of papers starting from
\cite{SZ} such formulas are used to study the distribution of zeros of random holomorphic
sections of vector bundles. A more classical reference for such
formulas is \cite{CL}.

\subsection{Open nodal lines and boundary critical points.}
In the Dirichlet case treated here, the key to computing
the $\lambda \rightarrow \infty$ asymptotics of $\Z( \lambda)$ is
the observation that {\em generically} the number of boundary
critical points of $\Phi_{a}(\theta;\lambda)$ equals twice the
number of open nodal lines. The following elementary result makes
this correspondence more precise.

We begin with the following  result for  functions vanishing on
$\partial \Omega$. It is an elementary result which is probably
known (see e.g. ~\cite{BGS}), but  since we could not find a direct
reference, we include the proof here.
\begin{lemma} \label{hopf}
Let $\Phi \in C^{\infty}(\Omega)$ satisfy $\Phi(q) = 0$ for all $q
\in \partial \Omega$. Assume that the critical points of $\Phi$  on
$\partial \Omega$ are simple. Then the number of open nodal lines of
$\Phi $  intersecting $\partial \Omega$ equals $\frac{1}{2}$ the
number of boundary critical points.
\end{lemma}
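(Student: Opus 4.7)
The plan is to analyze the nodal set of $\Phi$ in a neighborhood of $\partial\Omega$ using a normal (Fermi) coordinate system, deduce the local structure of the nodal set at boundary critical points, and then conclude by a double-counting argument pairing endpoints of open nodal lines with boundary critical points.

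First I would introduce coordinates $(\theta,t)$ in a tubular neighborhood of $\partial\Omega$, where $\theta\in[0,\ell]$ is arclength along $\partial\Omega$ and $t\in[0,\epsilon)$ is the distance from the boundary into $\Omega$ along the inward normal. Since $\Phi$ vanishes on $\partial\Omega = \{t=0\}$, Taylor expansion in $t$ gives a smooth factorization
\begin{equation*}
\Phi(\theta,t) \;=\; t\, f(\theta,t), \qquad f(\theta,0) \;=\; -v(\theta),
\end{equation*}
where $v(\theta)=\partial_\nu\varphi(q(\theta))$ is the boundary trace of the normal derivative. Thus in the open set $\{t>0\}$, the nodal set of $\Phi$ coincides with the zero set of $f$.

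Next I would examine $f$ near boundary points. At a point $q(\theta_0)$ where $v(\theta_0)\neq 0$, we have $f(\theta_0,0)\neq 0$, so by continuity $f$ is non-vanishing in a neighborhood and no component of the interior nodal set reaches $\partial\Omega$ at $q(\theta_0)$. Since a boundary critical point of $\Phi$ is exactly a zero of $v$ (the tangential derivative vanishes automatically on $\partial\Omega$), the assumption that such points are simple means $v'(\theta_0)\neq 0$, i.e.\ $\partial_\theta f(\theta_0,0)=-v'(\theta_0)\neq 0$. The implicit function theorem then produces a unique smooth curve $\theta=g(t)$ with $g(0)=\theta_0$ along which $f$ vanishes near $(\theta_0,0)$. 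Restricted to $\{t>0\}$, this is a single smooth nodal arc of $\Phi$ emanating from the boundary point $q(\theta_0)$, and it is the only branch of the nodal set meeting $\partial\Omega$ there.

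Finally I would carry out the double count. By the preceding local analysis, the closure of any open nodal line meeting $\partial\Omega$ has exactly two endpoints on $\partial\Omega$, each of which is a boundary critical point, and conversely every boundary critical point is the endpoint of a unique local nodal arc extending into $\Omega$, which in turn belongs to a unique open nodal line. Consequently the assignment (open nodal line $L$ meeting $\partial\Omega$) $\mapsto$ (pair of boundary critical points that are its endpoints) establishes a $2$-to-$1$ correspondence between boundary critical points and open nodal lines meeting $\partial\Omega$, yielding the claimed identity. The only delicate point in the argument is ruling out pathological behavior of the nodal set as it approaches the boundary, such as accumulating or producing multiple local branches; but this is precisely what the simplicity hypothesis together with the implicit function theorem applied to $f$ guarantees.
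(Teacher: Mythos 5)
Your proof is correct and follows essentially the same route as the paper: flatten the boundary, use the factorization $\Phi=t\,f$ (the paper's $\Phi=x_2\psi$) to identify boundary critical points with simple zeros of the normal-derivative trace, apply the implicit function theorem at such zeros to get a unique nodal arc, and count two boundary endpoints per open nodal line. The only cosmetic difference is that you handle the direction ``nodal intersection $\Rightarrow$ critical point'' by the contrapositive via the factorization, which lets you bypass the paper's case analysis of nodal lines tangent to $\partial\Omega$.
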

\begin{proof}
Let $\mathcal{N}_{\Phi}$ be the nodal set of $\Phi$. First, we assume that
$q$ is an intersection of $\N_{\Phi}$ with $\partial\Omega$. To
show that $
\partial_{\nu} \Phi(q) = 0,$ we flatten out the boundary and introduce local normal coordinates $(x_1,x_2) \in (-\epsilon,
\epsilon)^{2}$ with $x(q) = (0,0)$ and also,
$x_2(p) = 0$ for all $ p \in
\partial \Omega$ with $x_{2} >0$ in $\Omega$.

We locally have either
\begin{equation}
\label{eq:case x1=f(x2) nod} (f(x_2), x_2) \in \N_{\Phi}
\end{equation}
 for some $f \in C^{\infty}(-\epsilon, \epsilon)$ with $f(0) = 0$, or alternatively,
\begin{equation}
\label{eq:case x2=g(x1) nod} (x_{1}, g(x_{1}) ) \in \N_{\Phi},
\end{equation}
for a $C^{\infty}(-\epsilon, \epsilon)$ function $g$ with
$g(0)=g'(0) = 0$, where $g$ doesn't vanish identically on any
interval containing $0$ (the latter case is if the nodal line is
tangent to the boundary; only the former case is possible, see the
proof of the converse statement below). In fact, the simple zeros
assumption ensures that locally there is an equality rather than an
inclusion (see the proof of the converse statement).

In case \eqref{eq:case x1=f(x2) nod}, the Taylor expansion of
$\Phi(x_{1}, x_{2})$ around $(x_{1},0)$ gives
$$ \Phi(f(x_2), x_2) = \partial_{x_2} \Phi(f(x_2),0) x_{2} + {\mathcal O}(x_{2}^2) = 0,$$
since $\Phi (x_{1},0) = 0$, so that
$$ \partial_{x_2} \Phi(f(x_2),0)  = {\mathcal O}(x_{2}).$$
Since $f(0)=0$, it follows that $\partial_{x_2} \Phi(0,0) = 0$,
which finishes the proof in that case.

In case \eqref{eq:case x2=g(x1) nod}, similar reasoning gives
\begin{equation*}
\partial_{x_2} \Phi(x_1,0)  = {\mathcal O}(g(x_{1})),
\end{equation*}
provided that $g(x_1)$ does not vanish identically. Since $g(0)=0$,
it follows again that $\partial_{x_2} \Phi(0,0) = 0$.

To prove the converse, assume that $q \in \partial \Omega$ is a critical point
of $\Phi$ and show that $q$ is a nodal intersection point with the
boundary, $\partial \Omega$.  Using the same normal coordinates as
in the previous proof, we consider the  equation
\begin{equation} \label{implicit1}
\Phi(x_1,x_2) = 0,
\end{equation}
where, by assumption $ \partial_{x_2} \Phi(0,0) = 0.$ The fact that $\Phi(x_1,0) =0$ implies that
\begin{equation*}
\Phi(x_{1},x_{2}) = x_{2} \psi(x_{1},x_{2})
\end{equation*}
with $\psi(0,0)=0$ and $\partial_{x_{1}} \psi(0,0) \ne 0$. It
follows by the implicit function theorem that $\psi(x_{1},x_{2})=0$ is locally the graph
$x_{2}=g(x_{1})$.

\end{proof}

\section{A Kac-Rice formula for $\mathcal{Z}^{L,S}(\lambda)$}
\label{sec:int form exp intr}

\subsection{Some notation}
\label{sec:bi, cij def} Given a point $\theta\in [0, \ell]$ and the
spectral parameter $\lambda\in\R^{+}$, we introduce the vectors
$b_{1}^{L,S}(\theta;\lambda), b_{2}^{L,S}(\theta;\lambda) \in
\R^{N^{L,S}(\lambda)}$ where
\begin{equation}
\label{eq:b12L def} b_1^{L}(\theta;\lambda) = (v_{k}(\theta))_{
\lambda_k \in [0,\lambda]};\quad b_2^{L}(\theta;\lambda)
=(\partial_\theta v_{k}(\theta))_{ \lambda_k \in [0,\lambda]}
\end{equation}
for the long spectral range, and
\begin{equation}
b_1^{S}(\theta;\lambda) = (v_{k}(\theta))_{ \lambda_k \in
[\lambda,\,\lambda+1]}; \quad b_2^{S}(\theta;\lambda)=
(\partial_\theta v_{k}(\theta))_{ \lambda_k \in
[\lambda,\,\lambda+1]}
\end{equation}
for the short range. Here the dimension is given by \eqref{eq:spec
fnc long short def}. Note that for $\lambda$ sufficiently large, for every $\theta\in [0,
 \ell]$, the vectors $b_{1}^{L,S}(\theta;\lambda)$ and
$b_{2}^{L,S}(\theta;\lambda)$ are not collinear (see Corollary
\ref{cor:b1, b2 not colinear}). We will use this fact in the proof
of the main result of this section (Proposition \ref{mainprop}).

In addition to the vectors $b^{L,S}(\theta;\lambda)$ of boundary
traces of eigenfunctions it is also useful to define the
corresponding functions
$c_{ij}^{L,S} (\theta;\lambda):[0,\ell]\rightarrow \R$ for $i,j\in\{ 1,2 \}$,
where
\begin{equation*}
c_{ij}^{L}(\theta;\lambda) := \langle b_{i}^{L}(\theta;\lambda),\,
b_{j}^{L}(\theta;\lambda)\rangle;\quad c_{ij}^{S}(\theta;\lambda):=
\langle b_{i}^{S}(\theta;\lambda),\,
b_{j}^{S}(\theta;\lambda)\rangle.
\end{equation*}
For example, $c_{11}^{L,S}$ is just the squared length of
$b_{1}^{L,S}$ and $c_{12}^{L,S} = c_{21}^{L,S}$.

\subsection{A Kac-Rice formula}
In this section we give a Kac-Rice type formula for computing the
expected value of the number of zeros of a random combination  of
Dirichlet eigenfunctions. Results similar to the one here can be
found in ~\cite{CL} (see pg. 285).  Bleher, Shiffman and Zelditch
~\cite{BSZ1}, \cite{BSZ2} have proved a Kac-Rice    formula which applies in a more general situation.
However, for the convenience of the
reader, we give a direct elementary proof of Proposition \ref{mainprop} in the appendix.

The following Lemma \ref{lem:dbl zer prob 0} is used in the proof of Proposition \ref{mainprop}.
It implies that with probability $1$ all the zeros of
the boundary trace are {\em simple} and together with Lemma \ref{hopf} this implies the equality
\eqref{eq:Z=ELLL}, i.e. that the expected number of nodal intersection with the
boundary equals the expected number of zeros of the boundary trace.
The proof of Lemma \ref{lem:dbl zer prob 0} is also given in the appendix.

\begin{lemma}
\label{lem:dbl zer prob 0} For $\lambda$ sufficiently large, the set
$$ {\mathcal C} = \{a\in\R^{N^{L,S}(\lambda)}:\exists\theta\in
[0,\ell].\: V_{a}^{L,S}(\theta;\lambda) = \partial_{\theta}
{V_{a}^{L,S}}(\theta;\lambda)= 0\}$$ satisfies $\mu ({\mathcal
C})=0$ where $d\mu(a) := (2\pi)^{-N^{L,S}(\lambda)/2}
e^{-\|a\|^{2}/2} da.$ Moreover, $$codim\,{\mathcal C} \geq 1.$$
\end{lemma}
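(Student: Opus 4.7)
The plan is to exploit the linearity of $V_a^{L,S}$ in $a$ and reduce the vanishing condition to a codimension-two constraint on $a$ for each fixed $\theta$, then take the union over the one-dimensional parameter $\theta \in [0,\ell]$.

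First I would rewrite the defining condition. Since
\[
V_a^{L,S}(\theta;\lambda) = \langle a, b_1^{L,S}(\theta;\lambda)\rangle,\qquad \partial_\theta V_a^{L,S}(\theta;\lambda) = \langle a, b_2^{L,S}(\theta;\lambda)\rangle,
\]
we have $a \in \mathcal{C}$ if and only if there exists $\theta \in [0,\ell]$ such that $a$ lies in the linear subspace $W(\theta) := \{a : \langle a, b_1^{L,S}(\theta;\lambda)\rangle = \langle a, b_2^{L,S}(\theta;\lambda)\rangle = 0\}$. By Corollary \ref{cor:b1, b2 not colinear} (valid for $\lambda$ large enough), $b_1^{L,S}(\theta;\lambda)$ and $b_2^{L,S}(\theta;\lambda)$ are linearly independent for every $\theta$, so each $W(\theta)$ is a linear subspace of $\R^{N^{L,S}(\lambda)}$ of exact codimension $2$.

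Next I would produce an explicit smooth parametrization of $\bigcup_{\theta} W(\theta)$. Set $N = N^{L,S}(\lambda)$ and let $P(\theta)$ be the orthogonal projection onto $W(\theta)$, which is smooth in $\theta$ because $b_1^{L,S}$, $b_2^{L,S}$ depend smoothly on $\theta$ and are linearly independent. Then the map
\[
F : [0,\ell] \times \R^{N} \longrightarrow \R^{N}, \qquad F(\theta, v) = P(\theta) v,
\]
is smooth, and its image contains $\mathcal{C}$. For each fixed $\theta$ the image $F(\{\theta\}\times\R^N) = W(\theta)$ is $(N-2)$-dimensional, so $F$ factors through an $(N-1)$-dimensional smooth manifold, namely $\{(\theta, v) : v \in W(\theta)\}$. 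In particular $\mathcal{C}$ is the image of a smooth map from an $(N-1)$-dimensional manifold into $\R^N$, so by Sard's theorem (or directly, since smooth maps from lower-dimensional manifolds have image of vanishing Lebesgue measure) $\mathcal{C}$ has Lebesgue measure zero in $\R^N$, and therefore Gaussian measure zero. The same argument yields $\mathrm{codim}\,\mathcal{C} \geq 1$.

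The only real input beyond elementary dimension counting is the non-collinearity of $b_1^{L,S}(\theta;\lambda)$ and $b_2^{L,S}(\theta;\lambda)$; this is the potential obstacle, but it is exactly the content of Corollary \ref{cor:b1, b2 not colinear} cited above, so the proof reduces to assembling these pieces. A minor technical point is justifying the smooth dependence of the projection $P(\theta)$ on $\theta$, which follows from Gram–Schmidt applied to the smooth, pointwise linearly independent pair $\{b_1^{L,S}(\theta;\lambda), b_2^{L,S}(\theta;\lambda)\}$.
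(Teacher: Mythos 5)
Your proof is correct and is essentially the paper's argument: both identify $\mathcal{C}$ with the projection to $\R^{N}$ of the $(N-1)$-dimensional incidence set $\{(a,\theta):\langle a,b_1^{L,S}(\theta;\lambda)\rangle=\langle a,b_2^{L,S}(\theta;\lambda)\rangle=0\}$, with Corollary \ref{cor:b1, b2 not colinear} supplying the crucial linear independence. The only (cosmetic) difference is that the paper realizes this set as $\Psi_\lambda^{-1}(0,0)$ for the submersion $\Psi_\lambda(a,\theta)=(V_a^{L,S}(\theta;\lambda),\partial_\theta V_a^{L,S}(\theta;\lambda))$ and invokes the regular value theorem, while you parametrize it explicitly as the bundle of codimension-two subspaces $W(\theta)$ before applying the same measure-zero/dimension count.
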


\begin{proposition} \label{mainprop}
The expected number of nodal intersections
with the boundary $\partial \Omega$ is given by
\begin{equation}
\label{kacrice} \Z ^{L,S}(\lambda)=
\frac{1}{\pi}\int\limits_{0}^{\ell}
\sqrt{\frac{c_{22}^{L,S}(\theta;\lambda)}{c_{11}^{L,S}(\theta;\lambda)}
-
\bigg(\frac{c_{12}^{L,S}(\theta;\lambda)}{c_{11}^{L,S}(\theta;\lambda)}\bigg)^2}
d\theta,
\end{equation}
with $c_{ij}^{L,S}(\theta;\lambda)$ defined in section \ref{sec:bi,
cij def}.

\end{proposition}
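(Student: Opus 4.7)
The plan is to derive (\ref{kacrice}) from the classical ``smeared delta'' Kac--Rice identity. By (\ref{eq:Z=ELLL}) together with Lemmas \ref{hopf} and \ref{lem:dbl zer prob 0} it suffices to compute $\E[\tilde{\intersect}^{L,S}_{a}(\lambda)]$, where almost surely every zero of $V_{a}^{L,S}(\cdot;\lambda)$ on $[0,\ell]$ is simple. For such a $C^{1}$ function the elementary change-of-variables identity
\begin{equation*}
\tilde{\intersect}^{L,S}_{a}(\lambda) = \lim_{\varepsilon\to 0^{+}} \frac{1}{2\varepsilon}\int_{0}^{\ell} \mathbf{1}_{\{|V_{a}^{L,S}(\theta;\lambda)|<\varepsilon\}} \, |\partial_{\theta} V_{a}^{L,S}(\theta;\lambda)|\, d\theta
\end{equation*}
holds almost surely. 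I would then take expectations on both sides and exchange the limit, the $d\theta$-integral, and $\E$.

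The next step is to identify the joint Gaussian structure. Since $V_{a}^{L,S}(\theta;\lambda) = \langle a, b_{1}^{L,S}(\theta;\lambda)\rangle$ and $\partial_{\theta} V_{a}^{L,S}(\theta;\lambda) = \langle a, b_{2}^{L,S}(\theta;\lambda)\rangle$, the pair $(V_{a}^{L,S}(\theta;\lambda), \partial_{\theta}V_{a}^{L,S}(\theta;\lambda))$ is a centered Gaussian in $\R^{2}$ with covariance matrix
\begin{equation*}
\Sigma(\theta) = \begin{pmatrix} c_{11}^{L,S}(\theta;\lambda) & c_{12}^{L,S}(\theta;\lambda) \\ c_{12}^{L,S}(\theta;\lambda) & c_{22}^{L,S}(\theta;\lambda) \end{pmatrix},
\end{equation*}
directly from the definitions in Section \ref{sec:bi, cij def}. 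By Corollary \ref{cor:b1, b2 not colinear} the vectors $b_{1}^{L,S}$ and $b_{2}^{L,S}$ are non-collinear for $\lambda$ sufficiently large, so $\det \Sigma(\theta) = c_{11}c_{22} - c_{12}^{2} > 0$ uniformly in $\theta\in[0,\ell]$ and the joint density is smooth. Conditioning $\partial_{\theta} V_{a}^{L,S}$ on $V_{a}^{L,S}=0$ yields a centered Gaussian with variance $c_{22} - c_{12}^{2}/c_{11}$, so using $\E|N(0,\sigma^{2})|=\sigma\sqrt{2/\pi}$ and the marginal density value $(2\pi c_{11})^{-1/2}$ at $0$, the pointwise limit of the inner expectation equals
\begin{equation*}
\frac{1}{\sqrt{2\pi c_{11}}}\cdot\sqrt{\frac{2}{\pi}\Bigl(c_{22}-\frac{c_{12}^{2}}{c_{11}}\Bigr)} = \frac{1}{\pi}\sqrt{\frac{c_{22}^{L,S}}{c_{11}^{L,S}} - \Bigl(\frac{c_{12}^{L,S}}{c_{11}^{L,S}}\Bigr)^{2}},
\end{equation*}
which is exactly the integrand in (\ref{kacrice}).

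The main technical obstacle is the rigorous interchange of limit and expectation. I would bound $(2\varepsilon)^{-1}\E[\mathbf{1}_{\{|V_{a}^{L,S}|<\varepsilon\}}|\partial_{\theta}V_{a}^{L,S}|]$ uniformly in $\theta\in[0,\ell]$ and in small $\varepsilon>0$ by writing it explicitly as an integral of $|y|$ against the joint density at $(x,y)$ for $|x|<\varepsilon$; the uniform positivity of $\det\Sigma(\theta)$ together with the continuity of $\Sigma(\theta)$ on the compact $[0,\ell]$ produces a $\theta$- and $\varepsilon$-uniform dominating constant, after which dominated convergence gives the pointwise Gaussian integral computed above. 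Integrating over $\theta\in[0,\ell]$ then produces (\ref{kacrice}).
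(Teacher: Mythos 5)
Your identification of the Gaussian structure and the pointwise kernel computation are fine: conditioning $\partial_\theta V_a^{L,S}$ on $V_a^{L,S}=0$ gives exactly $\frac{1}{\pi}\sqrt{c_{22}/c_{11}-(c_{12}/c_{11})^2}$, which agrees with the paper's explicit change-of-variables computation in \eqref{eq:K comp}, and your uniform-in-$(\theta,\epsilon)$ bound on $K_\epsilon(\theta;\lambda)$ (the paper's \eqref{bound}) correctly justifies exchanging $\lim_{\epsilon\to 0}$ with the $d\theta$-integral.

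However, there is a genuine gap at the very first interchange: you never justify pulling the expectation over $a$ inside the $\epsilon\to 0$ limit, i.e.\ the identity $\E[\numzeros^{L,S}_{a}(\lambda)]=\lim_{\epsilon\to 0}\E[\numzeros^{L,S}_{a,\epsilon}(\lambda)]$. The dominating constant you produce bounds $K_\epsilon(\theta;\lambda)$, a quantity that is \emph{already averaged} in $a$; it cannot serve as a dominating function for the random variables $\numzeros^{L,S}_{a,\epsilon}(\lambda)$ themselves, which are unbounded in $a$ and converge to $\numzeros^{L,S}_{a}(\lambda)$ only almost surely. With Fatou your argument yields only $\Z^{L,S}(\lambda)\le\frac{1}{\pi}\int_0^\ell(\cdots)\,d\theta$; the reverse inequality requires a $\mu$-integrable bound on $\sup_{\epsilon}\numzeros^{L,S}_{a,\epsilon}(\lambda)$ in the $a$-variable. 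This is precisely the content of the paper's Lemma \ref{lem:Leps < ...}, which is the main technical work of the appendix: one shows $\sup_\epsilon\numzeros^{L,S}_{a,\epsilon}(\lambda)={\mathcal O}({\mathcal C}(V_a;\lambda))$, where ${\mathcal C}(V_a;\lambda)$ is the number of boundary \emph{critical points} of $V_a$ (near-tangencies to the levels $\pm\epsilon$ can inflate the Kac functional, so a.s.\ simplicity of the zeros from Lemma \ref{lem:dbl zer prob 0} does not suffice), and then that ${\mathcal C}(V_a;\lambda)\in L^1(d\mu)$ --- proved in the $C^\infty$ case via the degenerate-critical-point set on the sphere $\|a\|=1$, finiteness of connected components (Jordan--Brouwer), and local constancy of ${\mathcal C}(V_\omega;\lambda)$ via the implicit function theorem, or in the analytic case via the bound ${\mathcal C}(V_a;\lambda)={\mathcal O}(\lambda)$ from \cite{TZ1}. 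Without this step (or an alternative, e.g.\ the Banach-indicatrix identity $\numzeros^{L,S}_{a,\epsilon}=\frac{1}{2\epsilon}\int_{-\epsilon}^{\epsilon}\#\{\theta: V_a(\theta;\lambda)=u\}\,du$ combined with continuity of $u\mapsto\E\,\#\{V_a=u\}$ at $u=0$), your proposal does not establish the equality \eqref{kacrice}.
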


In the following sections \ref{longrange} and
\ref{shortrange}, we show that the leading term on the RHS of \eqref{kacrice} is
$\sim C_{\Omega}^{L,S} \lambda$ for some universal  constants
$C_{\Omega}^{L,S}>0.$

\section{Asymptotics for ${\mathcal Z}^{L}(\lambda).$} \label{longrange}

In this section, we compute the large-$\lambda$ asymptotics of the
pointwise sums $c_{11}^{L}(\theta;\lambda) =\sum_{\lambda_j \leq \lambda} |v_{j}(\theta)|^{2}$ and
$c_{22}^L(\theta;\lambda)= \sum_{\lambda_j
\leq \lambda} |\partial_{\theta} v_{j}(\theta)|^{2}.$ The leading asymptotics for the special case of  $c_{11}^{L}(\theta;\lambda)$ was computed by Ozawa \cite{Oz} and the asymptotics for the pointwise sum $c_{22}^{L}(\theta;\lambda)$  is closely-related, but we could not find it in the literature. For the benefit of the reader, we give the computation of both of the diagonal sums here. Alternatively, the leading asymptotics for  $c_{11}^{L}(\theta;\lambda)$ and $c_{22}^{L}(\theta;\lambda)$ follow from wave analysis in section \ref{shortrange}. However, since it is more elementary to  use the heat calculus for manifolds with boundary (and  also provides an independent verification of the asymptotics), this is the approach we take in this section.  The required bound for the mixed sum $c_{12}^{L}(\theta;\lambda)$ is more subtle and requires the full two-term asymptotics arising from the wave analysis in section \ref{shortrange} (see (\ref{defer})).

The relevant heat parametrix construction goes back to work of R. Seeley \cite{S} and here we briefly
recall the main results  that will be needed for our computations. The reader can find more detailed treatments in \cite{HZ} Appendix
12, \cite{S} and \cite{Oz}. Roughly speaking, the Seeley parametrix for the Dirichlet (or
Neumann) resolvent for an elliptic boundary value problem is
constructed as a sum of an interior parametrix and a
Poisson-kernel-type correction which compensates for the boundary
conditions. To describe it in more detail, it is useful to introduce
normal coordinates $(x,y) \in {\mathbb R}^{2}$ in a neighbourhood of
the boundary so that the Euclidean metric takes the form $dx^{2} +
h(x,y) dy^{2}$ and the domain is given in these local coordinates by
$x\geq 0$, the boundary corresponding to $x=0$. Consequently, since
$q:[0,\ell] \rightarrow \partial \Omega$ is the arclength
parametrization,
\begin{equation} \label{correspondence}
d\theta^{2} = h(0,y) dy^{2}.
\end{equation}

 We let $(x,y) \in U$
where $U$ is a sufficiently thin tubular neighbourhood of the
boundary, $\partial \Omega$.

 In local coordinates, the $N$-th order Seeley parametrix  for the Dirichlet
resolvent $R_{\Omega}(\mu):= (-\Delta_{\Omega} - \mu)^{-1}$ (here, $\mu = \lambda^2$) is of the form
 \begin{equation} \label{seeley 1}
R_{\Omega,N}(\mu) = \sum_{j=0}^{N} (2\pi)^{-2} \int_{\R^{2}} \int_{\R^{2}} e^{i(x-x')\xi} e^{i(y-y')\eta} c_{-2-j}(x,y,\xi,\eta,\mu) \, h^{-1/2}(x,y) \, d\xi d\eta \end{equation}
$$+ \sum_{j=0}^{N} (2\pi)^{-2}  \int_{\R^2} \int_{\R^2} e^{-ix' \xi} e^{i(y-y')\eta}
d_{-2-j}(x,y,\xi,\eta, \mu)  \, h^{-1/2}(x,y) \, d\xi d\eta.$$
In (\ref{seeley 1}) the factor $h^{-1/2}(x,y)$ is included in the integrand to cancel the coefficient in the volume measure $h^{1/2}(x,y) |dx dy|$ upon integration.
The first term on the RHS  in \eqref{seeley 1} is the interior
parametrix and the second one is the boundary-compensating
parametrix.  Here, we are mainly interested in the
leading coefficients, $c_{-2}, d_{-2}$. A direct computaton  in
local coordinates shows that
\begin{equation} \label{seeley2}
c_{-2}(x,y,\xi,\eta,\lambda) = (\xi^{2} + |\eta|^{2}_{h} - \mu)^{-1}
\end{equation}
where,
$$ |\eta|^{2}_{h}:= h^{-1}(x,y) \eta^{2}.$$
For the subsequent terms in the asymptotic series
$\sum_{j=0}^{N} c_{-2-j},$ $$\sup_{x,y \in U} | D_{x,y}^{\alpha}
c_{-2-j}(x,y,\xi,\eta; - i \mu)| \leq C_{\alpha} ( |\xi| |  + |\eta|
+ |\mu|^{1/2})^{-2-j}; \,\,\,j\geq 1, |\alpha| \geq 0.$$ Here, we
abuse notation somewhat and denote the holomorphic continuation of
$c_{-2-j}$  to the cone $\Sigma = \{ (\zeta,\mu) \in {\mathbb
C}^{2} \times {\mathbb C}; \Im \mu \leq (|\Re \mu| + |\Re \zeta|^{2}
- |\Im \zeta|^{2}) \}$ with $\zeta (\xi, \eta)$ also by $c_{-2-j}$. For the boundary-compensating terms one has
$$ d_{-2}(x,y,\xi,\eta,\mu) = - \frac{ e^{- x \sqrt{|\xi|^{2}_{h} -
\mu} } }{ \xi^{2} + |\eta|^{2}_{h} - \mu}, \,\,\,\, \Re
\sqrt{|\eta|^{2}_{h} - \mu}
>0, \,\,\, x \geq 0.$$   In this  case, the subsequent
terms in the sum $\sum_{j=0}^{N} d_{-2-j}$ satisfy estimates of the
form
$$ \sup_{x,y \in U} | x^{\alpha} D_{x}^{\beta} d_{-2-j}(x,y,\xi,\eta,-i\mu) | \leq C_{\alpha,\beta} (|\xi| + |\eta| + |\mu|^{1/2})^{-1-j-|\alpha|
+ |\beta|}$$
$$ \times \exp ( - \delta_{\alpha,\beta} y  (|\xi| + |\eta| + |\mu|^{1/2})  \, );  \,\,\,\, \delta_{\alpha,\beta} \geq
0.$$

To get a parametrix for the Dirichlet heat kernel of $\Omega,$ one writes the heat kernel as a contour integral
\begin{equation} \label{contour}
e^{-t \Delta_{\Omega}} = \frac{i}{2\pi} \int_{\Gamma} e^{-t\mu} (-\Delta_{\Omega}-
\mu)^{-1} d\mu. \end{equation} Here, $\Gamma = \{ - L + s e^{\pm
i\theta}, s \geq 0 \}$ with $L>0$ and $\theta\in (0,\frac{\pi}{2})$, is a
wedge enclosing the spectrum of $\Delta_{\Omega}$.

Substitution of the resolvent parametrix \eqref{seeley 1} in the
contour integral \eqref{contour} gives the following expression for
the heat kernel parametrix in $U \times U:$
\begin{equation} \label{seeley3}
H_{N}(t)(x,y;x',y')= \sum_{j=0}^{N} (2\pi)^{-2}\int \int e^{i(x-x')\xi} e^{i(y-y')\eta} \gamma_j(x,y,\xi,\eta,t)  \, h^{-1/2}(x,y) \, d\xi d\eta
\end{equation}
$$ +  \sum_{j=0}^{N} (2\pi)^{-2} \int \int e^{-ix' \xi} e^{i(y-y')\eta} \delta_{j}(x,y,\xi,\eta, t)  \, h^{-1/2}(x,y) \, d\xi d\eta.$$
It is well-known that $H_{N}(t)$ is a good approximation to $e^{-t\Delta_{\Omega}}$ in the sense that for $t \geq 0,$
\begin{equation} \label{derivativebounds}
|\partial_{z}^{\alpha} \partial_{w}^{\beta} (e^{-t\Delta_{\Omega}
}(z,w) - H_{N}(t)(z,w)) | \leq C_{\alpha,\beta} t^{ [ -(n+\alpha +
\beta) + N+1]/2} e^{-\delta |z-w|^{2}/t}; \, \, \delta
>0\end{equation} where $n=\dim\Omega=2$. By the Cauchy integral formula,
$$\gamma_{0}(x,y,\xi,\eta,t)= \frac{i}{2\pi} \int_{\Gamma} e^{-t \mu} c_{-2}(x,y,\xi,\eta,\mu) d\mu =  \frac{i}{2\pi} \int_{\Gamma} \frac{ e^{-t \mu}}{ \xi^{2} + |\eta|_{h}^{2} - \mu}  d\mu = e^{-t (\xi^{2} + |\eta|^{2}_{h})}.$$ Similarily,
$$\delta_{0}(x,y,\xi,\eta,t)= \frac{i}{2\pi} \int_{\Gamma} e^{-t \mu} d_{-2}(x,y,\xi,\eta,\mu) d\mu = - \frac{i}{2\pi} \int_{\Gamma}  e^{-t \mu} \frac{ e^{-  x \, \sqrt{|\eta|^{2}_{h} - \mu} } }{ \xi^{2} + |\eta|^{2}_{h} - \mu} d\mu$$
 $$=  - e^{-t (\xi^{2} + |\eta|^{2}_{h})} \times e^{-i x \xi}.$$
  So, subsititution of the formulas for $\gamma_0$ and $\delta_0$ and differentiation under the integral sign in \eqref{seeley3} gives

\begin{equation} \label{heatsum}
\sum_{j} e^{-t \lambda_j^{2}}  | v_j(\theta(y))|^{2} = \partial_{x}
\partial_{x'} H(t)(x,y;x',y')|_{x=x'=0,y=y'} + {\mathcal
O}(t^{-\frac{3}{2}}).
\end{equation}

$$=(2\pi)^{-2}  h^{-1//2}(0,y) \, \int \int e^{-t \xi^{2}} \,  \partial_x \partial_{x'}  \, [ \,  e^{-t  |\eta|_{h}^{2} } e^{-ix'\xi} \, ( e^{ix\xi} - e^{-ix\xi} ) \, ]|_{x=x'=0} \, d\xi d\eta +  {\mathcal O}(t^{-\frac{3}{2}})$$
$$=  (2\pi)^{-2} h^{-1/2}(0,y) \int \int e^{-t \xi^{2}} \,  (-i \xi ) \, \partial_x  \, [ \,  e^{-t  |\eta|_{h}^{2} } \, ( e^{ix\xi} - e^{-ix\xi} ) \, ]|_{x=0} \, d\xi d\eta +  {\mathcal O}(t^{-\frac{3}{2}})$$
In the last line, differentiation of $e^{-t |\eta|^{2}_{h}}$  gives an integrand that is odd in $\xi$ and so it integrates to zero.  Thus,
$$ \sum_{j} e^{-t \lambda_j^2} |\partial_{\nu} \phi_j(0,y)|^{2} = 2 (2\pi)^{-2}   h^{-1/2}(0,y) \, \int e^{-t \xi^2} \xi^{2} d\xi \times \int e^{-t |\eta|_{h}^2(0,y)}  d\eta
+ {\mathcal O}(t^{-\frac{3}{2}})$$
$$= a(y) t^{-2} + {\mathcal O}(t^{-\frac{3}{2}}),$$
where $a(y) = \frac{1}{4\pi}.$  By the  Karamata Tauberian theorem,
\begin{equation} \label{longsum1}
c_{11}^{L}(\theta;\lambda) = \sum_{j; \lambda_j^2 \leq \mu} |v_{j}(\theta)|^{2} \sim_{\lambda \rightarrow \infty}
c_{11}^{L}(\theta) \, \mu^{2} = c_{11}^{L}(\theta) \, \lambda^{4},
\end{equation} where,\begin{equation} \label{long1}
c_{11}^{L}(\theta) = (8\pi)^{-1}.
\end{equation}

 Similarily, the asymptotics for the second pointwise Weyl sum  $\sum_{\lambda_j^2
\leq \mu} |\partial_{\theta} v_j(\theta(y))|^{2}$ follows from the formula
\begin{equation} \label{longsum2+}
\sum_{\lambda_j^2 \leq \mu}
e^{-t \lambda_{j}^{2}}  |\partial_{\theta} v_{j}(\theta)|^{2} = (2\pi)^{-2} h^{-1/2}(0,y)  |\partial_{\theta} y|^{2} \cdot [ \partial_{y} \partial_{y'} \partial_x \partial_{x'}H(t)(x,y;x',y')] |_{x=x=0,y=y'} \end{equation}
$$ = 2 (2\pi)^{-2} h^{-1/2}(0,y) | \partial_{\theta} y|^{2} \int \int e^{-t \xi^2}  e^{- t h^{-1}(0,y) \eta^{2}} \xi^2 \eta^2  d\xi d\eta + {\mathcal O}(t^{-5/2}) \sim_{t \rightarrow 0^+} b(y)
 t^{-3},$$
 \vspace{1mm}
where, $ b(y) = \frac{1}{8\pi} h^{-1/2}(0,y) h(0,y)^{3/2} |\partial_{\theta}y|^{2}.$

There are Jacobian factors $\left| \partial_{\theta} y
\right| >0$ occuring in (\ref{longsum2+}) because of the change of
parametrization of the boundary given by $y \mapsto \theta(y)$ and
we also use that $\partial_\theta x=0.$ Differentiation under the
integral sign of the heat parametrix is justified since all
exponential sums of derivatives of boundary traces of eigenfunctions
are absolutely convergent in view of \eqref{derivativebounds} and
derivatives of the remainder term are controlled. Just as in
(\ref{longsum1}), the Tauberian theorem gives
\begin{equation} \label{longsum2}
c_{22}^{L}(\theta;\lambda) = \sum_{\lambda_j \leq \lambda} |\partial_{\theta}
v_j(\theta)|^{2} \sim_{\lambda \rightarrow \infty} c_{22}^{L}(\theta)
\lambda^6,
\end{equation}
where,
\begin{equation} \label{long2}
c_{22}^{L}(\theta) = \frac{1}{48 \pi}  [h(0,y(\theta))] \,
|\partial_{\theta} y|^{2} = (48\pi)^{-1}.
\end{equation} In (\ref{long2}) we have used that
$\theta(y) = \int_{0}^y [h(0,s)]^{1/2} ds$ (see
(\ref{correspondence})) and so, $|\partial_{\theta}y| =
[h(0,y(\theta))]^{-1/2}.$

Finally, we turn to  the mixed sum $c_{12}^{L}(\theta;\lambda).$ By the same heat analysis as for $c_{22}^{L}(\theta;\lambda),$ one shows that
$$ \sum_{\lambda_j \leq \lambda} |\partial_{\theta}v_j(\theta) + v_{j}(\theta)|^{2} \sim_{\lambda \rightarrow \infty} (48\pi)^{-1} \lambda^{6} $$ and so, by writing $2 c^{L}_{12}(\theta;\lambda) = \sum_{\lambda_j \leq \lambda} |(\partial_{\theta} +1)v_j(\theta)|^{2} - \sum_{\lambda_j \leq \lambda} |\partial_{\theta}v_{j}(\theta)|^{2} - \sum_{\lambda_j \leq \lambda} |v_{j}(\theta)|^{2}$ it follows that $c_{12}^{L}(\theta;\lambda) = o(\lambda^{6})$ uniformly for $\theta \in [0,\ell].$ Unfortunately, this bound is not sufficient to prove our theorem. Indeed, we will need the stronger bound
\begin{equation} \label{long3}
c_{12}^{L}(\theta;\lambda) = o(\lambda^{5}).\end{equation}
This estimate follows from analysis of the Dirichlet wave kernel and we defer the proof to section \ref{shortrange} (see (\ref{defer})).

\section{Asymptotics for ${\mathcal Z}^{S}(\lambda)$} \label{shortrange}
\subsection{Computation of $c_{11}^{S}(\theta;\lambda)$}
The heat analysis in the previous section is not sufficient to deal with the
short-range case, even for the diagonal sums. An alternative approach involves the Dirichlet wave operator. However, unlike the boundaryless case, one does not have an explicit wave parametrix, even for small time. This is the main complication in the case of manifolds
with boundary. Indeed, the small-time behaviour of the Dirichlet wave
operator kernel $E_{\Omega}(x,y;t); \, x,y \in \Omega$ is quite
complicated and in particular, has no conormal expansion at $t=0$.
Nevertheless, it turns out that the restriction to the boundary diagonal {\em does} have such an expansion (see
\eqref{expansion}). Using the fact that the singularity of
$E_{\Omega}^{\flat}(q,q;t); \, q \in \partial \Omega$ at $t=0$ is
isolated, Zelditch \cite{Z1} computed the asymptotics of $c_{11}^{S}(\theta
;\lambda).$ The idea is to use boundary wave front
calculus and the conormal expansion of Ivrii \cite{I} and Melrose
\cite{M1} together with the Hadamard variational formulas for the
Dirichlet (or Neumann) eigenvalues to compute the coefficients in
the conormal expansion of the boundary trace of the interior wave
group restricted to the boundary diagonal. The asymptotic expansion of
$c_{22}^{S}(\theta;\lambda)$ and the bound for $c_{12}^{S}(\theta;\lambda)$ in the next subsection are apparently new.
Since the computations for $c_{22}^{S}(\theta;\lambda)$ and $c_{12}^{S}(\theta;\lambda)$ are related to the one for
$c_{11}^{S}(\theta;\lambda)$ (and use it), we give the argument for each of spectral sums.


Let
$$E^{\Omega}(t) = \cos t \sqrt{\Delta_{\Omega}} $$
be the Dirichlet wave operator where $\Delta_{\Omega}$ denotes the
Dirichlet Laplacian on $\Omega$. To fix notation, we denote the
boundary restriction operator by $\gamma_{\partial \Omega}$, so that
for $u \in {\mathcal D}'(\Omega)$ with $WF(u) \cap N^* \partial
\Omega = \emptyset, \, \gamma_{\partial \Omega}u$ is the
uniquely-defined restriction (ie. pull-back of $u$) with
$\gamma_{\partial \Omega}u \in {\mathcal D}'(\partial \Omega)$ and
$\gamma_{\partial \Omega}u = u|_{\partial \Omega}$ when $u \in
C^{\infty}(\Omega).$

The diagonal kernel $t^{n}E_{\Omega}(t,x,x) \notin C^{\infty} ((0, \epsilon]
\times  \Omega \times \Omega )$ for any $n>0$ and $\epsilon >0$ when $x\in \Omega$.
The singularity at $t=0$ is not even isolated due to reflecting
loops based at $ x \in \Omega$ with period $t = 2 d(x,\partial
\Omega)$. However, when $x= q \in
\partial \Omega$, the singularity at $t=0$ is isolated. The reason
for this can be described as follows: let
$$ \Phi_{t}: T^*\Omega \rightarrow T^*\Omega$$ be the time-$t$ broken generalized bicharacteristic flow in $\Omega$ \cite{MSj}. For $(q,\eta) \in B^{*}
\partial \Omega$ we let $\xi(q,\eta) \in S^{*}_{+}(\Omega)$ be the
unique inward pointing unit vector that projects tangentially onto
$\eta$. We denote the tangential projection of $v \in
S^{*}_{+}(\Omega)$ here by $v^{T} \in B^{*}(\partial \Omega)$ and
$\gamma_{\partial \Omega}: \phi \mapsto \partial_{\nu}
\phi|_{\partial \Omega}$ the boundary trace operation.

Propagation of singularities in $\Omega$ \cite{MSj, M1} gives
\begin{equation} \label{wf1}
WF( E_{\Omega}(x,y,t)) = \{ (t,\tau,x,\xi,y,\eta); \tau = |\xi|_{g},
\Phi_{t}(x,\xi) = (y,\eta) \}.\end{equation}  For fixed $t \in {\mathbb R}$ we define the distribution $E_{\Omega,t}$  on $\Omega \times \Omega$ given by
$$ E_{\Omega,t}(x,y)= E_{\Omega}(x,y,t).$$
 Let
$N^{*}\Delta_{\partial \Omega} = \{ (q,\xi,q,-\xi) ; q \in \partial
\Omega \}$ be the conormal space of the diagonal $\Delta_{\partial
\Omega} \subset \partial \Omega \times \partial \Omega $. In view of
\eqref{wf1}, when $t \in (0,\epsilon) $ with $\epsilon >0$
sufficiently small,
$$ WF( E_{\Omega,t}) \cap N^{*}( \Delta_{\partial \Omega}) = \emptyset.$$
This just says that for $|t|$ sufficiently small, there are no
non-trivial, periodic broken bicharacteristics passing through $q \in \partial
\Omega$ with period $|t|$. Then by wave-front calculus, $E^{\flat}:= \gamma_{\Delta \partial \Omega} E_{\Omega}$ is well-defined for
$\epsilon >0$ small and for $t \in (0,\epsilon),$
\begin{equation} \label{wf2}
WF( E^{\flat}(q,q,t)) = \{ (t,\tau,q,\eta,q,\eta') \in
T^{*}(-\epsilon,\epsilon) \times B^*\partial \Omega \times
B^*\partial \Omega  ; \tau = |\eta|_{g}, \end{equation}
$$  [\Phi_{t}(q,\xi(q,\eta) )]^{T}
= (q' ,\eta') \}. $$
In \eqref{wf2},  $\xi(q,\eta) \in S^*_{in,q}
\Omega$ is the inward-pointwing unit co-vector at $q \in \partial
\Omega$ with $[\xi(q,\eta)]^{T} = \eta$ where $v^{T}$ is the
tangential projection at $q \in \partial \Omega$.

It follows that the singularity at $t=0$ is classical conormal
\cite{I,M1}, so that for $\epsilon >0$ small as $t \rightarrow 0^+,$
one has $t^{n} E^{\flat}(t,q,q) \in C^{\infty}$ where $n = \dim
\Omega$ and so,
\begin{equation} \label{expansion}
t^{n} E^{\flat}(t,q,q) \sim \sum_{k=0}^{\infty} a_k(q) t^{k}
\end{equation}
with $a_{j} \in C^{\infty}(\partial \Omega)$.  Under the non-recurrence assumption on the billiard flow, the expansion in (\ref{expansion}) holds, modulo $C^{\infty}(  (0,\epsilon^{-1}) \times \partial \Omega \times \partial \Omega),$  for all $t \in (0,\epsilon^{-1})$ where $\epsilon >0$ is fixed arbitrarily small.

Since  the method of proof
in \cite{I,M1} is non-constructive, one needs an additional argument
to compute the actual coefficient functions  $a_{j}.$ One way to do this is to integrate both sides in \eqref{expansion} against
a test function $\psi \in C^{\infty}(\partial \Omega)$. This is
allowed because $t^{n} E^{\flat} \in C^{\infty}$ and the result is that
\begin{equation} \label{coeff1}
  t^{n} \sum_j e^{i\lambda_j t}  \int_{\partial \Omega} |\partial_{\nu} \phi_{j}(q)|^{2} \psi(q) d\sigma(q)  \sim_{t \rightarrow 0^+} \sum_{k=0}^{\infty} t^{k} \int_{\partial \Omega}  a_{k}(q) \psi(q) d\sigma(q).
\end{equation}
Consider a normal variation of the domain $\Omega$ with variation
vector field, $\psi \nu$.  By taking the variation $\delta_{\psi}$  of both sides in (\ref{coeff1}) and using the Hadamard variation formula for
the eigenvalues:
$$ \delta_{\psi} \lambda_j^{2} = 2 \lambda_j \delta_{\psi} \lambda_j = \int_{\partial \Omega} \psi(q) |\partial_{\nu} \phi_{j}(q)|^{2} d\sigma(q),$$ it follows
that
$$ \frac{1}{i} \frac{\partial}{\partial t}  \left( \frac{2}{it} \sum_{j} \delta_{\psi}(e^{it\lambda_j}) \right) = \frac{1}{it} \sum_j 2 \lambda_j \delta_{\psi} (e^{it \lambda_j}) = \sum_j (\delta_{\psi} \lambda_j^2) e^{i\lambda_j t}  $$
\begin{equation} \label{coeff2}
= \sum_{j}  e^{i\lambda_j t}  \int_{\partial \Omega} \psi(q) |\partial_{\nu} \phi_j(q)|^{2} d\sigma(q).
\end{equation}
 By a well-known asymptotic expansion for the wave trace \cite{I,M1}
\begin{equation}\label{coeff3a}
\sum_{j} e^{i\lambda_j t} \sim_{t \rightarrow 0^+} (2\pi)^{-2} vol
(\Omega) (t + i0)^{-2} + C' vol (\partial \Omega) (t +
i0)^{-1} + ...
\end{equation}
where the dots denote lower-order terms.  Substitution of (\ref{coeff3a}) in (\ref{coeff2}) gives
\begin{equation} \label{coeff3b}
\frac{1}{i} \frac{\partial}{\partial t} \left( \frac{2}{it} \sum_{j} \delta_{\psi}( e^{i\lambda_j t})  \right)  \sim_{t \rightarrow 0^+} 2 (2+1) (2\pi)^{-2} \delta_{\psi} vol
(\Omega) (t + i0)^{-4} + C' \delta_{\psi} vol (\partial \Omega) (t +
i0)^{-5} + ...
\end{equation}
Equating coefficiens in
\eqref{coeff1}-\eqref{coeff3a} implies that
\begin{equation} \label{upshot1}
\sum_{j} e^{i\lambda_j t} \int_{\partial \Omega} \psi(q)
|\partial_{\nu} \phi_{j}(q)|^{2} d\sigma(q) \sim_{t \rightarrow 0^+}
6 (2\pi)^{-2} \delta_{\psi} vol(\Omega) (t+i0)^{-4} + C' \delta_{\psi} vol
(\partial \Omega) (t+i0)^{-3} + ...,
\end{equation}
where, a direct computation gives
$$ \delta_{\psi}  vol(\Omega) = \int_{\partial \Omega} \psi(q) d\sigma(q). $$
 Under the assumption that there are measure zero loops at the boundary  it follows from (\ref{coeff1}) by a standard Tauberian argument (see for example \cite{Z1} section 5.2) that 
\begin{equation} \label{shortsum1}
c_{11}^{S}(\theta,\lambda)= \sum_{\lambda_j \in [ \lambda, \lambda +
1] } |  v_j(\theta)|^{2}  =  c_{11}^{S}(\theta)\cdot
\lambda^{3} + o(\lambda^{2}),\end{equation}
and since the test function $\psi \in C^{\infty}(\partial \Omega)$ is arbitrary, it follows from (\ref{upshot1}) that
\begin{equation} \label{short1}
c_{11}^{S}(\theta) =  (2\pi)^{-1}. \end{equation}
\subsection{Computation of $c_{22}^{S}(\theta;\lambda)$} The computation of $c_{22}^{S}$ uses the asymptotics for  $c_{11}^{S}$ above together with some additional integration by parts and applications of the local Weyl law \cite{HZ} for boundary traces of eigenfunctions.
Let $\partial_{\theta_q}:C^{\infty} (\partial \Omega) \rightarrow C^{\infty}(\partial \Omega)$ be the tangential derivative given by $\partial_{\theta_q}:= dq (\partial_{\theta}).$
Since $WF(     \partial_{\theta_q} ) \subset \{ (q, \xi; q,
- \xi); \, (q,\xi) \in T^* \partial \Omega \},$ by wave front calculus,
\begin{equation} \label{wf4}
WF(  [\partial_{\theta_q} E_{\Omega} \partial_{\theta_q'} ]^{\flat}
(q,q',t) )  = WF(  [\partial_{\theta_q} E_{\Omega}^{\flat}
\partial_{\theta_q'} ](q,q',t) )  \subset WF(
E_{\Omega}^{\flat}(q,q',t)).
\end{equation}

Then by the same argument as in the previous section, it follows
that $ \partial_{\theta_q}  E_{\Omega}^{\flat} \partial_{\theta_q'}$
has a unique restriction to the boundary diagonal which we denote by
$[  \partial_{\theta_q}  E_{\Omega}^{\flat} \partial_{\theta_q'} ]
(q,q,t):=  \gamma_{\Delta_{\partial \Omega}} [\partial_{\theta_q}
E_{\Omega}^{\flat} \partial_{\theta_q'}].$  Since wave fronts
restrict, under the non-recurrence assumption,  for $t \in (0,\epsilon^{-1})$ with
$\epsilon >0$ arbitrarily small, modulo $C^{\infty}((0,\epsilon^{-1}) \times \partial \Omega \times \partial \Omega),$ one has the following conormal expansion
\begin{equation} \label{conormal2}
[\partial_{\theta_q}  E_{\Omega}^{\flat} \partial_{\theta_q'} ](q,q,t) \sim_{t \rightarrow 0^+} \sum_{k=0}^{\infty} b_{k}(q)  t^{-n-2+k}
\end{equation}
where $b_k \in C^{\infty}(\partial \Omega)$ and  the corresponding two-term asymptotic formula
$$ \sum_{\lambda_j \leq \lambda} |\partial_{\theta} v_j(\theta)|^{2} = c_{22}^{S}(\theta) \lambda^{5} + o(\lambda^{4}).$$
 The remainder of this subsection is devoted to computing the  leading coefficient $c_{22}^{S}(\theta).$

From now on, we  put $\hbar_j = \frac{1}{\lambda_j}; j=1,2,3,...$ and use
semiclassical pseudodifferential calculus on $\partial \Omega.$
Let $A_{\hbar_j}=Op_{\hbar_j}(a)$ with $a \in S^{0}_{cl}(T^*
\partial \Omega)$ and $A_{\hbar_j}^{q}:= q^{-1} \circ
Op_{\hbar_j}(a) \circ q $ be the corresponding  semiclassical
pseudodifferential operator on the parametrizing circle ${\mathbb
R}/\ell \mathbb{Z} $. It follows that in the Dirichlet case
considered here \cite{HZ},
\begin{equation} \label{Weyl1}
\frac{1}{N_{0}(\lambda)} \sum_{\lambda_j \leq \lambda} \langle
\hbar_{j}^{2}  A^{q}_{\hbar_j} v_{j}, v_{j} \rangle =
\frac{2}{\pi vol \Omega} \int_{B^{*}\partial \Omega} a(y,\eta) \sqrt{1-|\eta|_{h}^{2}} \, dy d\eta  + o(1),
\end{equation}
where, $N_{0}(\lambda) := (2\pi)^{-2}  (vol B^*\Omega) \, \lambda^{2}.$
Since  we are assuming here that the measure of periodic broken bicharacteristics is
zero, using the conormal expansion in \eqref{conormal2} one gets that
for some constant $C_{a}$ (yet to be determined),
\begin{equation} \label{Weyl2}
\frac{1}{N_{0}(\lambda)} \sum_{\lambda_j \leq \lambda} \hbar_j^2 \langle
A^{q}_{\hbar_j} v_{j}, v_{j} \rangle = C_{a}+ o(\lambda^{-1}).
\end{equation} But then from the weak law in \eqref{Weyl1}  it
follows that  $C_{a} = \int_{B^*\partial \Omega} a(y,\eta)
d\sigma(y,\eta)$ where $d\sigma(y,\eta):= 2 (\pi \, \text{vol} \Omega)^{-1}  \sqrt{1-|\eta|_{h}^{2}} \, dy d\eta.$ Taking sums with $\lambda_j \leq \lambda + 1$ and
$\lambda_j \leq \lambda$ in \eqref{Weyl2} and subtracting them gives
\begin{equation} \label{Weyl3}
\sum_{\lambda_j \in [\lambda,\lambda +1]} \hbar_j^2 \langle A^{q}_{\hbar_j} v_j,
v_j \rangle = \int_{B^{*} \partial \Omega} a(y,\eta) d\sigma(y,\eta)
[ N_{0}(\lambda +1) - N_{0}(\lambda)]  + o_{A}(\lambda).
\end{equation}
Now rescale and write $\lambda_j = \hbar_j^{-1} $ and let $\psi \in
C^{\infty}({\mathbb
R}/\ell \mathbb{Z} )$. In analogy with the previous
section, we want to compute
\begin{equation} \label{coeff3}
\sum_{\hbar_j^{-1} \in [\lambda, \lambda + 1]} \int_{0}^{\ell} \psi
(\theta) |\partial_{\theta} v_j(\theta)|^{2} d\theta
=\sum_{\hbar_j^{-1} \in [\lambda, \lambda + 1]} \hbar_j^{-2} \langle
\psi (\hbar_j D_{\theta})^{2} v_j, v_j \rangle
\end{equation}
$$+  \sum_{\hbar_j^{-1} \in [\lambda, \lambda + 1]}  \hbar_j^{-2} \langle (\hbar_j D_{\theta} \psi) v_{j}, \hbar_j D_{\theta} v_j
\rangle$$
$$= \sum_{\hbar_j^{-1} \in [\lambda, \lambda + 1]} \hbar_j^{-2} \langle \psi (\hbar_j D_{\theta})^{2} v_j, v_j \rangle + \frac{1}{2} \sum_{\hbar_j^{-1}  \in [\lambda,  \lambda + 1]}  \langle \partial^{2}_{\theta}\psi,  |v_{j}|^{2}
\rangle$$

$$=  \sum_{\hbar_j^{-1} \in [\lambda, \lambda + 1]} \hbar_j^{-2} \langle \psi (\hbar_j D_{\theta})^{2} v_j, v_j \rangle + {\mathcal O}(\lambda^{3}).$$
The second last line in \eqref{coeff3} follows by integration by
parts and the last line by the result in the previous section for
$c_{11}^{S}(\theta;\lambda)$ (here, we also use that $\hbar_j^{-1} \in [\lambda, \lambda +1]$).
 Since $$WF'_{\hbar_j}( F(\hbar_j^{-1})) \subset  B^{*}_{1+\delta}(\partial \Omega) \times B^{*}_{1+\delta}(\partial \Omega),$$  from the boundary jump
equation (\ref{jump}) it follows that for any $\delta >0,$
$$ WF_{\hbar_j}(v_{j}) \subset (dq^t)^{-1} \, B^{*}_{1+ \delta}(\partial \Omega).$$
Let $\chi \in C^{\infty}_{0}( (dq^t)^{-1} B^{*}_{1+2\delta} \partial \Omega)$
with $\chi(\theta,\xi) =1$ for $(\theta,\xi) \in (dq^t)^{-1} B^{*}_{1+ \delta} \partial
\Omega.$ Then, from the last line in \eqref{coeff3}, $$\sum_{\hbar_j^{-1} \in [\lambda, \lambda + 1]} \hbar_j^{-2} \langle \psi (\hbar_j D_{\theta})^{2} v_j, v_j \rangle + {\mathcal O}(\lambda^{3}) $$
 $$ = \sum_{\hbar_j^{-1} \in [\lambda, \lambda + 1]}  \hbar_j^{-2} \langle
Op_{\hbar_j}(\chi) \psi (\hbar_j D_{\theta})^{2} Op_{\hbar_j}(\chi)
v_j, v_j \rangle + {\mathcal O}(\lambda^{3})$$
$$= \lambda^{4}( 1 + {\mathcal O}(\lambda^{-1})) \, \sum_{\hbar_j^{-1} \in [\lambda, \lambda + 1]}  \hbar_j^{2} \langle
Op_{\hbar_j}(\chi) \psi (\hbar_j D_{\theta})^{2} Op_{\hbar_j}(\chi)
v_j, v_j \rangle + {\mathcal O}(\lambda^{3})$$

Clearly,
$\psi Op_{\hbar_j}(\chi) (\hbar_j D_{\theta})^{2} Op_{\hbar_j}(\chi)
\in Op_{\hbar_j}(S^{0}_{cl})$ has principal symbol $$\sigma [\,   \psi \, Op_{\hbar_j}(\chi)
 |\hbar_j D_{\theta}|^{2} Op_{\hbar_j}(\chi) \, ]  (\theta,\xi) \,  =  \,  \psi(\theta)  \, |\xi|^{2},$$
for  $(\theta,\xi) \in  (dq^{t})^{-1} B^{*}\partial
\Omega.$  But then, since $\hbar_j^{-1} \in [\lambda, \lambda +1],$ it follows from
\eqref{Weyl3} that  one has the two-term asymptotic formula $c_{22}^{S}(\theta;\lambda) = c_{22}^{S}(\theta) \lambda^{5} + o(\lambda^{4}),$ where,
\begin{equation} \label{shortsum2}
c_{22}^{S}(\theta) = \lim_{\lambda \rightarrow \infty} \lambda^{-5} \frac{2  }{\pi \,   \text{vol}  \Omega}
\left(  \int_{|\xi| \leq 1} |\xi|^{2} \,
 \sqrt{ 1-|\xi|^{2}  } \, d\xi \right)  \,  \lambda^{4} [ N_0(\lambda +1) - N_0(\lambda)],  \end{equation} and  since $N_0(\lambda +1) -N_0(\lambda) = 2(2\pi)^{-2}  (vol B^*\Omega ) \, \lambda,$ it follows that
\begin{equation} \label{short2}
c_{22}^{S}(\theta) =  2 \times \frac{\pi}{8}   \times \frac{2}{\pi \text{vol} \, \Omega} \times (2\pi)^{-2} \text{vol}B^{*} \Omega  = (8 \pi)^{-1}.
 \end{equation}

\subsection{ Bound for $c_{12}^{S}(\theta;\lambda)$}
By the same argument as for $c_{22}^{S}(\theta;\lambda)$ one gets the two-term asymptotic formula,
\begin{equation} \label{newsum}
\sum_{\lambda_j \in [\lambda,\lambda +1]} |\partial_{\theta}v_j(\theta) + v_j(\theta)|^{2} = (8\pi)^{-1} \lambda^{5} + o(\lambda^{4}), \end{equation}
uniformly for $\theta \in [0,\ell].$ Then,
\begin{equation} \label{completesquare}
2c_{12}^{S}(\theta;\lambda) = \sum_{\lambda_j \in [\lambda,\lambda +1]} |\partial_{\theta}v_j(\theta) + v_j(\theta)|^{2} - \sum_{\lambda_j \in [\lambda,\lambda +1]} |\partial_{\theta}v_j(\theta)|^{2} - \sum_{\lambda_j \in[\lambda,\lambda+1]} | v_j(\theta)|^{2} = o(\lambda^4), \end{equation}
 where,  the final bound in (\ref{completesquare}) follows from the two-term asymptotic formulas in (\ref{short1}), (\ref{short2}) and (\ref{completesquare}).
The upshot is that
\begin{equation} \label{shortsum3}
c_{12}^{S}(\theta;\lambda)  = o(\lambda^4),\end{equation}
and so, by the triangle inequality, \begin{equation} \label{defer}
|c_{12}^{L}(\theta;\lambda)| \leq C_{\Omega} \lambda \, |c_{12}^{S}(\theta;\lambda)| = o(\lambda^5). \end{equation}

\section{Concluding the proof of theorem \ref{mainthm}}

\begin{corollary}
\label{cor:b1, b2 not colinear} There exists a constant $\lambda_0
>0$ such that for $\lambda \geq \lambda_0$ the vectors
$b_1^{L,S}(\theta;\lambda)$ and $b_2^{L,S}(\theta;\lambda)$ are
linearly independent for all $\theta \in [0,\ell].$
\end{corollary}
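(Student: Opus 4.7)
The plan is to recognize that the linear independence of $b_1^{L,S}(\theta;\lambda)$ and $b_2^{L,S}(\theta;\lambda)$ in $\R^{N^{L,S}(\lambda)}$ is equivalent to the positivity of the Gram determinant
\begin{equation*}
G^{L,S}(\theta;\lambda) := c_{11}^{L,S}(\theta;\lambda)\,c_{22}^{L,S}(\theta;\lambda) - \bigl(c_{12}^{L,S}(\theta;\lambda)\bigr)^{2},
\end{equation*}
and then to read off a lower bound on $G^{L,S}$ directly from the asymptotics of the diagonal and off-diagonal sums that were established in Sections \ref{longrange} and \ref{shortrange}.

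First I would treat the long-range case. Combining \eqref{longsum1}--\eqref{long2} with the bound \eqref{long3}, we have
\begin{equation*}
c_{11}^{L}(\theta;\lambda) = \tfrac{1}{8\pi}\lambda^{4} + o(\lambda^{4}), \quad c_{22}^{L}(\theta;\lambda) = \tfrac{1}{48\pi}\lambda^{6} + o(\lambda^{6}), \quad c_{12}^{L}(\theta;\lambda) = o(\lambda^{5}),
\end{equation*}
uniformly in $\theta \in [0,\ell]$. Consequently
\begin{equation*}
G^{L}(\theta;\lambda) = \tfrac{1}{8\pi}\cdot\tfrac{1}{48\pi}\,\lambda^{10} + o(\lambda^{10}),
\end{equation*}
so there exists $\lambda_{0}^{L}$ such that $G^{L}(\theta;\lambda) > 0$ for every $\theta\in[0,\ell]$ whenever $\lambda \geq \lambda_{0}^{L}$.

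The short-range case is completely analogous: \eqref{shortsum1}, \eqref{short1}, \eqref{short2}, and \eqref{shortsum3} yield
\begin{equation*}
c_{11}^{S}(\theta;\lambda)\,c_{22}^{S}(\theta;\lambda) = \tfrac{1}{2\pi}\cdot\tfrac{1}{8\pi}\,\lambda^{8} + o(\lambda^{8}), \qquad \bigl(c_{12}^{S}(\theta;\lambda)\bigr)^{2} = o(\lambda^{8}),
\end{equation*}
uniformly in $\theta$, and hence $G^{S}(\theta;\lambda) > 0$ for $\lambda \geq \lambda_{0}^{S}$. Setting $\lambda_{0} := \max(\lambda_{0}^{L}, \lambda_{0}^{S})$ and noting that a strictly positive Gram determinant is exactly the statement that the two vectors are linearly independent in the Euclidean space $\R^{N^{L,S}(\lambda)}$ finishes the argument.

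The only subtle point, and what I would view as the main thing to verify carefully, is that all three asymptotic statements hold \emph{uniformly} in $\theta \in [0,\ell]$. For $c_{11}^{L,S}$ and $c_{22}^{L,S}$ the leading coefficients come out to be absolute constants (independent of the boundary point), so the Karamata/Tauberian arguments in Sections \ref{longrange}--\ref{shortrange} produce uniform estimates once one checks that the remainder terms in the heat- and wave-parametrix expansions are bounded uniformly on the compact circle $\partial\Omega$; this is built into the smoothness of the symbolic coefficients. For the off-diagonal sum $c_{12}^{L,S}$ the uniformity is what was already asserted in the $o(\cdot)$ bounds of the previous sections, and in particular \eqref{defer} propagates the short-range uniform bound to the long range.
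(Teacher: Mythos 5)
Your proposal is correct and follows essentially the same route as the paper: the paper likewise characterizes collinearity via the Cauchy--Schwarz equality condition $c_{11}^{L,S}c_{22}^{L,S}-|c_{12}^{L,S}|^{2}=0$ (your Gram determinant) and then invokes the asymptotics \eqref{long1}, \eqref{long2}, \eqref{long3} and \eqref{shortsum1}, \eqref{shortsum2}, \eqref{shortsum3} to conclude the determinant grows like $\lambda^{10}$ (resp.\ $\lambda^{8}$) with positive leading coefficient. Your added remark on uniformity in $\theta$ is consistent with the uniform statements of those bounds in sections \ref{longrange} and \ref{shortrange}.
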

\begin{proof}
By Cauchy-Schwarz, the vectors $b_1^{L,S}(\theta;\lambda) =
(v_1(\theta),....,v_\lambda(\theta))$ and $b_1^{L,S}(\theta;\lambda)
= (\partial_{\theta} v_1(\theta),...., \partial_{\theta}
v_\lambda(\theta)) $  are collinear for some $\theta \in [0,\ell]$
if and only if
$$ c_{11}^{L,S}(\theta;\lambda) \cdot c_{22}^{L,S}(\theta;\lambda) - |c_{12}^{L,S}(\theta;\lambda)|^{2} = 0.$$
But by our pointwise Weyl sum computations in \eqref{long1},
\eqref{long2} and \eqref{long3},
$$  c_{11}^{L}(\theta;\lambda) \cdot c_{22}^{L}(\theta;\lambda) - |c_{12}^{L}(\theta;\lambda)|^{2}  \sim_{\lambda \rightarrow \infty} C_{\Omega}(\theta) \lambda^{10};  \,\,C_{\Omega}(\theta) >0,$$
and similarily, from \eqref{shortsum1}, \eqref{shortsum2} and
\eqref{shortsum3},
$$  c_{11}^{S}(\theta;\lambda) \cdot c_{22}^{S}(\theta;\lambda) - |c_{12}^{S}(\theta;\lambda)|^{2} \sim_{\lambda \rightarrow \infty} \tilde{C}_{\Omega}(\theta) \lambda^{8}; \,\, \tilde{C}_{\Omega}(\theta) >0.$$
So, for $\lambda>0$ large, neither expression vanishes.
 \end{proof}

\begin{remark}
\label{rem:dt, d2t ind} We note that by a similar analysis to the
one given in sections \ref{longrange} and \ref{shortrange} and
Corollary \ref{cor:b1, b2 not colinear} above, one easily shows that
the vectors $(\partial_{\theta} v_{j}(\theta))_{\lambda_j \in
[\lambda, \lambda +1]}$ and $(\partial_{\theta}^{2}
v_{j}(\theta))_{\lambda_j \in [\lambda, \lambda +1]}$ are linearly
independent for each $\theta \in [0,\ell]$. The analogous result for
the long-range case is also valid.
\end{remark}

\begin{proof}[Proof of theorem \ref{mainthm}]
Combining the asymptotic formulas for
$c_{ij}^{L,S}(\theta;\lambda),\,  i,j=1,2$  in \eqref{long1},
\eqref{long2} and \eqref{long3} and \eqref{short1},
\eqref{short2} and \eqref{shortsum3} gives
$$ \left( \frac{c_{12}^{L,S}(\theta;\lambda)}{c_{11}^{L,S}(\theta;\lambda)} \right)^{2}  =  o(\lambda)$$
uniformly for $\theta \in [0,\ell]$. On the other hand,
$$  \frac{c_{22}^{L,S}(\theta;\lambda)}{c_{11}^{L,S}(\theta;\lambda)} \sim_{\lambda \rightarrow \infty} \frac{c_{22}^{L,S}(\theta) }{c_{11}^{L,S}(\theta)}
\lambda^{2}$$ where,
$$ \frac{c_{22}^{L}(\theta) }{c_{11}^{L}(\theta)} =\frac{1}{6} \,\,\,\,\,\text{and} \,\,\,\,\, \frac{c_{22}^{S}(\theta) }{c_{11}^{S}(\theta)} = \frac{1}{4}.$$

So, substitution into the Kac-Rice formula in Proposition
\ref{mainprop}
$$ {\mathcal Z}^{L,S}(\lambda) = \frac{1}{\pi}  \left( \int_{0}^{\ell} \left|  \frac{c_{22}^{L,S}(\theta) }{c_{11}^{L,S}(\theta) }  \right|^{\frac{1}{2}}  \,  d\theta \right)  \, \lambda + o(\lambda),
$$
implies that ${\mathcal Z}^{S}(\lambda) = \frac{\ell(\partial
\Omega)}{2\pi} \lambda + o(\lambda)$
and ${\mathcal Z}^{L}(\lambda) = \frac{\ell(\partial
\Omega)}{\sqrt{6}\pi} \lambda + o(\lambda).$

\end{proof}

\appendix

\section{Proof of the Kac-Rice formula}

\begin{proof}[Proof of Lemma \ref{lem:dbl zer prob 0}]
The proof in the long and short range cases is the same, so without
loss of generality we treat the long range case here.

In the following, we fix $\lambda > 0$ and put $N=N^{L}(\lambda).$
Consider the map
\begin{equation*}
\Psi_{\lambda} : \R^{N}\times [0,\ell]\rightarrow \R^2
\end{equation*}
\begin{equation*}
(a,\theta)\mapsto \big( V_{a}^{L}(\theta;\lambda), \partial_{\theta} V^{L}_{a}
(\theta;\lambda)\big).
\end{equation*}
Clearly, $${\mathcal C} = \pi_{1}( \Psi_{\lambda}^{-1}(0,0)),$$
where $\pi_{1}$ is the projection onto $\R^{N}$. We claim that
$\Psi_{\lambda}$ is a submersion. Given this, one has
\begin{equation*}
\dim \Psi_{\lambda}^{-1}(0,0) = N-1,
\end{equation*}
and so, $\dim {\mathcal C} \le N-1$,
which proves the Lemma.

To see that $\Psi_{\lambda}$ is a submersion, we note that its
$(N+1)\times 2$ differential matrix is given by
\begin{equation*}
d\Psi_{\lambda} (a,\theta) = \left(\begin{matrix} b_{1} ^{L}(\theta;
\lambda) &b_{2}^{L}(\theta;\lambda) \\ * &*
\end{matrix}\right),
\end{equation*}
where $b_{1}^{L}(\theta;\lambda)$ and $b_{2}^{L}(\theta;\lambda)$
are the vectors introduced in section \ref{sec:bi, cij def}. The
matrix $d\Psi_{\lambda}$ is of full rank (i.e. rank $2$) for
$\lambda$ sufficiently large, by Corollary \ref{cor:b1, b2 not
colinear} and so $\Psi_{\lambda}$ is a submersion.

\end{proof}

\begin{definition}
Let $\chi_{[-1,1]}$ be the characteristic function of the interval $[-1,1]$ and fix $\epsilon >0$.
We introduce the random variables
\begin{equation}
\label{eq:LLLeps def} \numzeros_{a,\epsilon}^{L,S}(\lambda)
:=\frac{1}{2\epsilon}\int\limits_{0}^{\ell} \chi_{[-1,1]}
\bigg(\frac{V_{a}^{L,S}(\theta;\lambda)}{\epsilon}\bigg)
\big|\partial_{\theta}V^{L,S}_{a}(\theta;\lambda)\big| \, d\theta.
\end{equation}

\end{definition}

When  $V_{a}^{L,S}(\cdot;\lambda)$ has no double zeros,
$$\lim_{\epsilon \rightarrow 0^+} \numzeros_{a,\epsilon}^{L,S}(\lambda) = \numzeros^{L,S}_a(\lambda).$$

To compute $\E\numzeros^{L,S}_a(\lambda ),$ we need to interchange
the $\epsilon \rightarrow 0^+$ limit and integration over $a \in
{\R}^{N^{L,S}(\lambda)}$. This requires showing that
$\numzeros_{a,\epsilon}^{L,S}(\lambda )$ is bounded uniformly in
$\epsilon >0$ by a $\mu$-integrable function in the $a$-variables.

\begin{lemma} \label{LDC}
\label{lem:Leps < ...}  Fix $\lambda >0$ sufficiently large and let
$ \numzeros_{a,\epsilon} ^{L,S}(\lambda)$  be defined as in
\eqref{eq:LLLeps def}.

(i) \, In the case where $\partial \Omega $ is $C^{\infty},$
$$ \sup_{\epsilon \in [0,\epsilon_0]} \numzeros_{a,\epsilon}
^{L,S}(\lambda) \in L^{1} \left( {\mathbb R}^{N^{L,S}(\lambda)}; \, e^{-\|a\|^2/2}\frac{da}{(2\pi)^{N^{L,S}(\lambda)/2}
} \right).$$

(ii) \, In the case where $\partial \Omega$ is $C^{\omega},$ one has  the explicit bound
$$ \sup_{\epsilon \in [0,\epsilon_0]} \numzeros_{a,\epsilon}
^{L,S}(\lambda) \leq C_{\Omega}^{L,S} \lambda,$$
where, $C_{\Omega}^{L,S}>0$ are constants depending only on $\Omega.$
\end{lemma}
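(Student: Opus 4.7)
The approach is to reduce both statements to uniform bounds on the level-set counts $\#\{V_a^{L,S}(\cdot;\lambda)=y\}$ via the co-area formula. Under the almost-sure regularity afforded by Lemma \ref{lem:dbl zer prob 0} together with Sard's theorem, for a full-$\mu$-measure set of $a$ the standard co-area identity gives
\begin{equation*}
\numzeros^{L,S}_{a,\epsilon}(\lambda) \;=\; \frac{1}{2\epsilon}\int_{-\epsilon}^{\epsilon} \#\{\theta\in[0,\ell]:\, V_a^{L,S}(\theta;\lambda)=y\}\,dy,
\end{equation*}
so that immediately
\begin{equation*}
\sup_{\epsilon \in (0,\epsilon_0]}\numzeros^{L,S}_{a,\epsilon}(\lambda) \;\le\; \sup_{|y|\le\epsilon_0}\#\{\theta:\, V_a^{L,S}(\theta;\lambda)=y\}.
\end{equation*}

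For part (i), Rolle's theorem produces a critical point of $V_a^{L,S}$ between any two distinct $y$-preimages, so uniformly in $y$
\begin{equation*}
\#\{V_a^{L,S}=y\} \;\le\; \#\{\partial_\theta V_a^{L,S}=0\}+1.
\end{equation*}
Thus it suffices to check that $\E\#\{\partial_\theta V_a^{L,S}(\cdot;\lambda)=0\}<\infty$. Since, by Remark \ref{rem:dt, d2t ind}, for $\lambda$ sufficiently large the vectors $(\partial_\theta v_j(\theta))_j$ and $(\partial_\theta^2 v_j(\theta))_j$ are linearly independent for every $\theta \in [0,\ell]$, the joint Gaussian law of $(\partial_\theta V_a^{L,S}(\theta),\partial_\theta^2 V_a^{L,S}(\theta))$ is non-degenerate. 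The Kac-Rice identity of Proposition \ref{mainprop} applied to the shifted process $\partial_\theta V_a^{L,S}$ then yields a finite integral representation for $\E\#\{\partial_\theta V_a^{L,S}=0\}$, establishing the $L^1$ bound.

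For part (ii), the plan is a deterministic Jensen-type estimate. Real-analyticity of $\partial\Omega$ lets each boundary trace $v_j$ be continued holomorphically to a fixed complex tubular neighborhood $\Sigma_\tau$ of $[0,\ell]$ of some width $\tau>0$ independent of $j$, with a Bernstein-type bound $\|v_j\|_{L^\infty(\Sigma_\tau)} \le Ce^{C\tau\lambda_j}\|v_j\|_{L^\infty([0,\ell])}$ (as in the analysis of Toth-Zelditch \cite{TZ1}). Summing against $a$ gives
\begin{equation*}
\|V_a^{L,S}\|_{L^\infty(\Sigma_\tau)} \;\le\; e^{C\tau\lambda}\,\|V_a^{L,S}\|_{L^\infty([0,\ell])}.
\end{equation*}
Using the scale-invariance $\#\{cV_a^{L,S}=cy\}=\#\{V_a^{L,S}=y\}$, we normalize $\|V_a^{L,S}\|_{L^\infty([0,\ell])}=1$; shrinking $\epsilon_0$ if necessary, for every $|y|\le\epsilon_0$ a reference point $\theta_*\in[0,\ell]$ can be chosen with $|V_a^{L,S}(\theta_*)-y|$ bounded below by a universal constant. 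Jensen's formula applied to the holomorphic function $V_a^{L,S}-y$ on $\Sigma_\tau$ then yields
\begin{equation*}
\#\{V_a^{L,S}=y\}\;\le\;\frac{C}{\tau}\,\log\frac{\|V_a^{L,S}-y\|_{L^\infty(\Sigma_\tau)}}{|V_a^{L,S}(\theta_*)-y|}\;\le\;C_\Omega\,\lambda,
\end{equation*}
and undoing the scaling preserves the right-hand side, proving (ii).

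The main obstacle I anticipate lies in part (ii): arranging the normalization and the Bernstein-Jensen estimate so that the resulting constant is genuinely $a$-independent. In particular, choosing the reference point $\theta_*$ uniformly in $a$ requires a compactness argument on the unit sphere in $\R^{N^{L,S}(\lambda)}$ combined with the linear independence of $\{v_j\}$ (itself a consequence of unique continuation for the Dirichlet eigenfunctions), so that no $\|a\|$-dependent degeneracy creeps into the final bound.
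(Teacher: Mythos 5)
Your reduction in (i) via the Banach indicatrix identity and Rolle, namely $\sup_{\epsilon\le\epsilon_0}\numzeros_{a,\epsilon}^{L,S}(\lambda)\le\sup_{|y|\le\epsilon_0}\#\{V_a^{L,S}=y\}\le\#\{\partial_\theta V_a^{L,S}=0\}+1$, is sound, but the step that is supposed to finish (i) is circular as written: you bound $\E\#\{\partial_\theta V_a^{L,S}=0\}$ by invoking ``the Kac--Rice identity of Proposition \ref{mainprop}'', yet in this paper Proposition \ref{mainprop} is proved by dominated convergence using precisely Lemma \ref{LDC}(i); its analogue for the derivative process would in turn need its own domination lemma, so you cannot quote it here. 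The gap is repairable, but you must say how: the upper-bound half of Kac--Rice requires only Fatou's lemma, i.e.\ almost-sure simplicity of the zeros of $\partial_\theta V_a^{L,S}$ (the analogue of Lemma \ref{lem:dbl zer prob 0} for the pair $(\partial_\theta V_a,\partial_\theta^2 V_a)$, cf.\ Remark \ref{rem:dt, d2t ind}, which you do invoke) together with the pointwise Cauchy--Schwarz bound on the corresponding density, with $c_{22}^{L,S}$ and $\sum_j|\partial_\theta^2 v_j(\theta)|^2$ in place of $c_{11}^{L,S},c_{22}^{L,S}$ — or simply a citation of \cite{CL} for nondegenerate smooth Gaussian processes. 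Note also that the paper's own proof of (i) is entirely different and softer: it scales $a$ to the unit sphere, shows that the number of boundary critical points of $V_\omega$ is locally constant off the set of $\omega$ admitting a degenerate critical point, and uses the Jordan--Brouwer separation theorem plus the implicit function theorem to conclude this count is constant and finite on each of finitely many connected components, hence bounded for fixed $\lambda$; no expectation computation is needed.

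The more serious gap is in (ii), and it is exactly the point you flag but do not resolve: a lower bound for $|V_a^{\C}(\theta_*)-y|$ at a reference point that is uniform in $a$ and costs at most $e^{C_\Omega\lambda}$ against the numerator in Jensen's formula. After you normalize $\|V_a\|_{L^\infty([0,\ell])}=1$, the rescaled levels are no longer confined to $[-\epsilon_0,\epsilon_0]$, and in the regime where $y$ is comparable to the extremal value of $V_a$ (e.g.\ $V_a$ with tiny oscillation near its maximum and $y$ close to that value) no single reference point yields a usable denominator. Your proposed remedy — compactness of the unit sphere in $\R^{N^{L,S}(\lambda)}$ combined with linear independence of the $v_j$ — can only produce a positive constant depending on $\lambda$ (the sphere and its dimension change with $\lambda$), so the resulting Jensen bound is some $C(\lambda)$, not $C_\Omega\lambda$ with $C_\Omega$ depending only on $\Omega$; the claimed explicit bound does not follow. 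The paper avoids this by outsourcing the quantitative step to Toth--Zelditch: it uses the exponential growth estimate (\ref{growth}), Cauchy--Schwarz to get $|V_a^{\C}(\zeta)|\le\sqrt{N^{S}(\lambda)}\,\|a\|\,e^{C_\Omega|\Im\zeta|\lambda}$, and then Theorem 3 of \cite{TZ1} — which already contains the Jensen/two-constant argument with the correct normalization, including the lower bound you are missing — to bound the number of boundary critical points by $C_\Omega(\lambda+|\log\|a\||)$, reduces to $\|a\|=1$ by scale invariance, and concludes with $\sup_\epsilon\numzeros_{a,\epsilon}^{L,S}(\lambda)=\mathcal{O}({\mathcal C}(V_a;\lambda))$. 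Without a quantitative input of that strength, your sketch of (ii) does not close.
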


\begin{proof}
(i) \,\, The argument for both the long and short ranges are the
same, so without loss of generality, we assume here that $\lambda_j
\in [\lambda, \lambda +1]$. To prove the first part of the Lemma,
only the behaviour in $a\in {\mathbb R}^{N^{S}}$ is important (and
not in $\lambda$). So, we henceforth fix $\lambda >0$ sufficiently
large and suppress dependence of all sets, functions, etc. on
$\lambda$. In the smooth case, the asymptotics in $\lambda$ of the
number of boundary critical points is, to our knowledge, open and
probably rather difficult (see however part (ii) and \cite{TZ1} for
a sharp result in $\lambda$ in the real-analytic case).

First, we note that since $V_{a}^{S}(\theta;\lambda) =
\sum_{\lambda_j \in [\lambda, \lambda +1]} a_j v_j(\theta)$ is
linear in $a \in {\mathbb R}^{N^S}$, the number of boundary critical
points, $C(V_a;\lambda)$, is invariant under scaling by $\|a\| \neq
0$; that is,
\begin{equation} \label{compact}
{\mathcal C}(V_{a}; \lambda) = {\mathcal C}(V_{\omega}; \lambda)
\end{equation}
where, $\omega:= \frac{a}{\|a\|}.$ Thus, we are reduced to proving that
\begin{equation} \label{intsphere}
{\mathcal C}(V_{\omega};\lambda) \in L^{1}( {\mathbb S}^{N^S-1}; d\omega),
\end{equation}
where $d \omega$ is the standard, unit constant curvature  hypersurface measure on ${\mathbb S}^{N^{S}-1}.$

The second point is that by the same argument as in the previous
Lemma \ref{lem:dbl zer prob 0} (see also Corollary \ref{cor:b1, b2
not colinear} and Remark \ref{rem:dt, d2t ind}), with $V_{a}$ (resp.
$\partial_\theta V_{a}$) replaced by $\partial_{\theta} V_{\omega}$
(resp. $\partial_{\theta}^{2} V_{\omega}$) one shows that given the
map $d_{\theta} \Psi_{\lambda}: {\mathbb S}^{N^{S}-1}\times [0,\ell]
\rightarrow {\mathbb R}^{2}$ defined by
$$ d_{\theta} \Psi_{\lambda}:(\omega,\theta) \mapsto (\partial_{\theta} V_{\omega}(\theta;\lambda), \partial_{\theta}^{2} V_{\omega}(\theta;\lambda)),$$
the set $d_{\theta} \Psi_{\lambda}^{-1}(0,0) \subseteq {\mathbb
S}^{N^{S}-1} \times \mathbb{R}$ is a finite union of compact,
$C^{\infty}$ hypersurfaces (here, we use that $V_{\omega}(\theta +
\ell;\lambda) = V_{\omega}(\theta;\lambda) $ for all $\theta \in
{\mathbb R}).$ But then, since ${\mathbb S}^{N^{S}-1}\times {\mathbb
R} \cong {\mathbb R}^{N^{S}} - 0,$  by the generalized
Jordan-Brouwer separation theorem applied to each of the compact,
connected hypersurfaces \cite{A}, it follows that $ ( {\mathbb
S}^{N^{S}-1} \times {\mathbb R}) - \, d_{\theta}
\Psi_{\lambda}^{-1}(0,0)  $ has finitely-many connected components.
Now consider
$$ {\mathcal C}' := \{ \omega \in {\mathbb S}^{N^{S}-1}; \, \exists \theta \in [0,\ell],
\partial_{\theta} V_{\omega}(\theta;\lambda) = \partial_{\theta}^{2} V_{\omega}(\theta;\lambda) =
0 \}.$$ Written another way, ${\mathcal C}' = \pi ( d_{\theta}
\Psi_{\lambda}^{-1}(0,0) )$ where $\pi: {\mathbb S}^{N^{S}-1} \times
{\mathbb R} \rightarrow {\mathbb S}^{N^{S}-1}$ is the smooth
canonical projection map $\pi:(\omega,\theta) \mapsto \omega.$ Since
$\pi$ maps connected sets to connected sets, it follows that
$$ H^{0}( {\mathbb S}^{N^{S}-1} - {\mathcal C}' ) < \infty$$ and clearly,
$\omega ({\mathcal C}') =0.$ Now, make the decomposition ${\mathbb S}^{N^{S}-1}
- {\mathcal C}' = B_{1}\cup \cdots \cup B_{M}; \, M <\infty$ where
the $B_{j}$'s are the (open) connected components (which, implicitly
depend on $\lambda.$) Without any loss in generality, we choose a
point $\omega_{0} \in B_{1}.$  Then, all boundary critical points of
$V_{\omega_0}(\theta;\lambda)$ are simple and we  denote them by
$\theta_0,...,\theta_P$ where $P$ is finite (again, implicitly
dependent on $\lambda$).  Since the following argument is the same
for each of the critical points, we consider the first one: $\theta
= \theta_{0}$.  For $(\theta_0, \omega_0) \in [0,\ell] \times
{\mathbb S}^{N^{S}-1},$ we have by definition
\begin{equation} \label{IFT1}
\partial_{\theta} V_{\omega_0}(\theta_0;\lambda) = 0,
\end{equation}
and since $\omega_0 \in B_1 \subset {\mathbb S}^{N^{S}-1} -
{\mathcal C}',$  it follows that
\begin{equation} \label{IFT2}
\partial_{\theta}^{2} V_{\omega_0} (\theta_0;\lambda) \neq 0.
\end{equation}

Then, by the implicit function theorem, it follows that for $\omega
\in B_1,$ there is a unique $C^{\infty}$ family of solutions
$$(\omega, \theta_0(\omega)) \in B_{1} \times [0,\ell]$$ to \eqref{IFT1}
satisfying $\theta_0(\omega_0) = \theta_0.$ Repeating the same
implicit function theorem argument for the other zeros gives the
existence of smooth families of solutions
$\theta_1(\omega),...,\theta_{P}(\omega)$ to (\ref{IFT1}) for
$\omega \in B_1$ with respective initial conditions
$\theta_1,...,\theta_P$. One can apply the same analysis to each of
the other connected components $B_2,...,B_M$. Let
\begin{equation*}
B_{k,m}=\{\omega\in B_{k}:\: {\mathcal C}(V_{\omega};\lambda) = m
\},
\end{equation*}
so that $$B_{k} = \bigcup\limits_{m=0}^{\infty} B_{k,m}.$$ The
argument above implies that each of the $B_{k,m}$ is open, and thus
also closed (being the complement of $\bigcup\limits_{m'\ne m}
B_{k,m'}$) in $B_{k}$. This implies that for each $m$, either
$B_{k,m}=B_{k}$ or $B_{k,m}=\emptyset$. We conclude that ${\mathcal
C}(V_{\omega};\lambda)$ is constant (and finite) on each of the
(finitely-many) connected components $B_{k}$ of the $N-1$ sphere. It follows that ${\mathcal C}(V_{\omega};\lambda) \in
L^{1}({\mathbb S}^{N^{S}-1};d\omega)$ and so also,
\begin{equation} \label{IFT3}
{\mathcal C}(V_{a};\lambda)  \in L^{1} \left( {\mathbb
R}^{N^{L,S}(\lambda)}; \,
e^{-\|a\|^2/2}\frac{da}{(2\pi)^{N^{L,S}(\lambda)/2} } \right).
\end{equation}

So, given \eqref{IFT3}, it suffices  to prove  that
$\numzeros_{a,\epsilon}^{S}(\lambda) = {\mathcal O}( {\mathcal C}(V_a;\lambda) \, )$ uniformly in $\epsilon >0.$ The domain of integration in the definition \eqref{eq:LLLeps
def} of $\numzeros_{\epsilon}^{S}(\lambda)$ is a finite union
$[0,\ell]=\bigcup\limits_{j=1}^{n} I_{j}$ of disjoint intervals
$I_{j}=[( c_{j},d_{j})]$ each containing at most one critical point.
It is clear that the contribution of each of the $I_{j}$ to the
integral \eqref{eq:LLLeps def} is at most $2$. Moreover, for each
$j$ one of the following holds:
\begin{enumerate}
\item $c_j=0$.

\item $d_j = \ell$.

\item $\partial_{\theta}{V^{S}_{a}}(\theta;\lambda)=0$ for some $\theta\in I_j$.

\item Either ($V_{a}^{S}(c_{j};\lambda) = -\epsilon$ and $V_{a}^{S}(d_{j};\lambda) = \epsilon$) or
($V_{a}^{S}(c_{j},\lambda) = \epsilon$ and $V_{a}^{S}(d_{j};\lambda)
= -\epsilon$).
\end{enumerate}

In the latter case, if $j<n$, then ${V^{S}_{a}}' (\theta;\lambda)
=0$ for some $\theta\in [d_j, c_{j+1}]$. Therefore, $n \le
2+{\mathcal C(V_a;\lambda)+({\mathcal C}(V_a;\lambda)+1)=2{\mathcal
C}(V_a};\lambda)+3$. Thus the integral in \eqref{eq:LLLeps def} is
bounded by
\begin{equation*}
\numzeros_{a,\epsilon}^{S}(\lambda) \le 2 \cdot n \le 2\cdot ( \, 2
{\mathcal C}(V_a;\lambda)+3 \, ) = \mathcal{O}( {\mathcal
C}(V_a;\lambda)).
\end{equation*}
(ii) \,\, Although we will not need the much stronger analytic bound
in Lemma \ref{LDC} (ii) in the current paper, we give the proof here
since in light of recent work
of  Toth and Zelditch \cite{TZ1}, we think it is of independent interest.  From \cite{TZ1} Theorems 1-3, one
has the following bound for the number of boundary critical points
of individual eigenfunctions, $v_{j}$:
\begin{equation} \label{TZ}
{\mathcal C}(v_{j}) \leq C_{\Omega} \lambda_{j}.
\end{equation}
Let $F(\lambda): C^{\infty}(\partial \Omega)\rightarrow
C^{\infty}(\partial \Omega)$ be defined by
$$F(\lambda) f (q)  = \int_{\partial \Omega}  \partial_{\nu_q} G_{0}(q,q';\lambda)  f(q') d\sigma(q')$$
 where $G_{0}(q,q';\lambda) = \frac{i}{4} \text{Ha}^{(1)}_{0}( \lambda |q-q'|)$ is the free outgoing Greens function for the Helmholtz equation in ${\mathbb R}^{2}.$
Roughly speaking, the bound in \eqref{TZ}  follows by
holomorphically continuing both sides of the jumps-equation
\begin{equation} \label{jump}
\partial_{\nu}\phi_{j}|_{\partial \Omega} = -2F(\lambda_j)
(\partial_\nu \phi_{j}|_{\partial \Omega}),
\end{equation} and using a Jensen-type
argument to bound the number of complex (and hence real) zeros in a
complex tube $\partial \Omega_{\C}$ containing $\partial \Omega$ by
the exponential growth exponent of  the holomorphically continued
$F(\lambda)$-kernel. Let $v_{j}^{\C}$ denote the holomorphic
continuation of $v_{j}$ to complex parameter strip $A(\epsilon):[0,\ell] \times [-\epsilon, \epsilon]$ where the holomorphic
continuation of $q$ is  $q^{\C}: A(\epsilon) \rightarrow \partial
\Omega_{\C}$. Then, one has the exponential growth estimate
\cite{TZ1}
\begin{equation} \label{growth}
\sup_{\zeta \in A(\epsilon)} |v_{j}^{\C}(\zeta)| \leq \exp
[ C_{\Omega} \, |\Im \zeta| \, \lambda_{j}]. \end{equation}
Moreover,
$$ V_{a}^{\C} (\zeta) = \sum_{\lambda_j \in [\lambda,\lambda + 1]} a_{j}
v_{j}^{\C}(\zeta)$$ and so by (\ref{growth}) and Cauchy-Schwartz,
\begin{equation} \label{growth2}
| V_{a}^{\C} (\zeta)| \leq  \sqrt{N^{S}(\lambda)} \,\, \|a\| \,\,
\exp [C_{\Omega}\,  |\Im \zeta|\,\lambda]. \end{equation} Since
$N^{S}(\lambda) \sim \lambda,$ Theorem 3 in \cite{TZ1} applied to
$\Phi^{\C}_{a}$ gives
\begin{equation} \label{maingrowth}
{\mathcal C}(V_{a};\lambda) \leq C_{\Omega}  \, \max_{\zeta \in
\partial \Omega_{\C}}  | \, \log | V_{a}^{\C}(\zeta)|  \, |\,  \leq
C_{\Omega}' (  \lambda + | \, \log \|a\| \, |)
\end{equation}
in view of (\ref{growth2}). But by the scaling invariance
(\ref{compact}), it suffices to assume that $\|a\| =1,$ so the last
bound in (\ref{maingrowth}) is ${\mathcal O}_{\Omega}(\lambda).$
Now, just as in (i) one uses that $\sup_{\epsilon}
\numzeros_{a,\epsilon}^{S}(\lambda) = {\mathcal O}( {\mathcal
C}(V_a;\lambda)).$ Again, the same argument works in the long-range
case.

\end{proof}

\begin{proof}[Proof of Proposition \ref{mainprop}]

We use \eqref{eq:Z=ELLL} so that we are to compute the expected
number $\E \numzeros^{L,S}(\lambda)$ of the zeros of $V_{a}^{L,S}
(\cdot;\lambda)$, defined by \eqref{eq:VaL def} and \eqref{eq:VaS
def} on $[0,\ell]$.

By Lemma \ref{lem:dbl zer prob 0}  we can assume that
$V_a^{L,S}(\theta;\lambda)$ has no double zeros so that
\begin{equation*}
\numzeros_{a}^{L,S}(\lambda) = \lim\limits_{\epsilon \rightarrow 0}
\frac{1}{2\epsilon}\int\limits_{0}^{\ell}
\chi_{[-1,1]}\bigg(\frac{V^{L,S}_{a}(\theta;\lambda)}{\epsilon}\bigg)|
\partial_{\theta}{V^{L,S}_{a}}(\theta;\lambda) | d\theta,
\end{equation*}
and so,
\begin{equation*}
\E\numzeros_{a}^{L,S}(\lambda) = \E\lim\limits_{\epsilon \rightarrow 0}
\frac{1}{2\epsilon}\int\limits_{0}^{\ell}
\chi_{[-1,1]}\bigg(\frac{V^{L,S}_{a,\epsilon}(\theta;\lambda)}{\epsilon}\bigg)
|\partial_{\theta}{V^{L,S}_{a}}(\theta;\lambda) | d\theta.
\end{equation*}

Since  by Lemma  \ref{lem:Leps < ...}, for each fixed $\lambda \in {\mathbb R}^{+}$ sufficiently large,  the function
$$ \sup_{\epsilon} \numzeros_{a,\epsilon}^{L,S}(\lambda) \in L^{1} \left( {\mathbb R}^{N^{L,S}(\lambda)}; \, e^{-\|a\|^2/2}\frac{da}{(2\pi)^{N^{L,S}(\lambda)/2}
} \right),$$ by  dominated
convergence, we interchange the order of the integration and limit
and get
\begin{equation*}
\begin{split}
\E\numzeros^{L,S}_{a}(\lambda) &= \lim\limits_{\epsilon \rightarrow 0}
\frac{1}{2\epsilon} \E\int\limits_{0}^{\ell}
\chi_{[-1,1]}\bigg(\frac{V^{L,S}_{a}(\theta;\lambda)}{\epsilon}\bigg)
|\partial_{\theta}{V^{L,S}_{a}}(\theta;\lambda) | d\theta \\&= \lim\limits_{\epsilon
\rightarrow 0}  \int\limits_{0}^{\ell} \E \bigg[\frac{1}{2\epsilon}
\chi_{[-1,1]}\bigg(\frac{V^{L,S}_{a}(\theta;\lambda)}{\epsilon}\bigg)
|\partial_{\theta}{V^{L,S}_{a}}(\theta;\lambda)  |\bigg] d\theta,
\end{split}
\end{equation*}
by Fubini.

We rewrite the last equality as
\begin{equation}
\label{eq:exp N kern} \E\numzeros_{a}^{L,S}(\lambda)=
\lim\limits_{\epsilon \rightarrow 0}  \int\limits_{0}^{\ell}
K_{\epsilon}^{L,S}(\theta;\lambda) d\theta,
\end{equation}
where
\begin{equation} \label{eq:Keps def}
K_{\epsilon}^{L,S}(\theta;\lambda) = \E \bigg[\frac{1}{2\epsilon}
\chi_{[-1,1]}\bigg(\frac{V^{L,S}_{a}(\theta;\lambda)}{\epsilon}\bigg)
|\partial_{\theta}{V^{L,S}_{a}}(\theta;\lambda)  |\bigg].
\end{equation}

Assuming $\lambda$ is fixed, we denote $N=N^{L,S}(\lambda)$. To
compute $K_{\epsilon}^{L,S}(\theta;\lambda)$ for a given $\theta\in
[0,\ell]$, we note that
$$\langle b_{1}^{L,S}(\theta;\lambda),\, a \rangle = V_{a}^{L,S}(\theta;\lambda)$$ and
$$\langle b_{2}^{L,S}(\theta;\lambda),\, a \rangle = \partial_{\theta}{V_{a}^{L,S}}(\theta;\lambda).$$
By  Corollary \ref{cor:b1, b2 not colinear} for $\lambda$ large,
$b_{1}^{L,S}(\theta;\lambda)$ and $b_{2}^{L,S}(\theta;\lambda)$ are
nowhere collinear and so the vectors $\{
b_{1}^{L,S}(\theta;\lambda),\, b_{2}^{L,S}(\theta;\lambda)\}$ can be
extended to a basis
$$\big\{b_{1}^{L,S}(\theta;\lambda),\, b_{2}^{L,S}(\theta;\lambda),\,
b_{3}^{L,S}(\theta;\lambda),\,\ldots,
b_{N}^{L,S}(\theta;\lambda)\big\}$$ of $\R^{N}$ with the property
that $\big\{b_{3}^{L,S}(\theta;\lambda),\,\ldots,\,
b_{N}^{L,S}(\theta;\lambda)\big\}$ is an orthonormal basis of
$\text{span} \{b_{1}^{L,S}(\theta;\lambda),\,
b_{2}^{L,S}(\theta;\lambda)\}^{\perp}$. Let
$B^{L,S}(\theta;\lambda)\in M_{N}(\R)$ be the matrix with row
vectors $b_{k}^{L,S}(\theta;\lambda)$. Then
\begin{equation*}
B^{L,S}(\theta;\lambda) B^{L,S}(\theta;\lambda)^{t}
=\left(\begin{matrix} C^{L,S}(\theta;\lambda) &0\\0
&I_{N-2}\end{matrix}\right),
\end{equation*}
where
\begin{equation*}
C^{L,S}(\theta;\lambda) =\left(\begin{matrix}c_{11}^{L,S}(\theta;\lambda) & c_{12}^{L,S}(\theta;\lambda) \\
c_{21}^{L,S}(\theta;\lambda) &c_{22}^{L,S}(\theta;\lambda)
\end{matrix} \right),
\end{equation*}
with $c_{ij}$ being defined in section \ref{sec:bi, cij def}. In
particular,
$$\det{B}^{L,S}(\theta;\lambda)=\sqrt{\det{C^{L,S}(\theta;\lambda)}}.$$

Writing the Gaussian probability density explicitly in
\eqref{eq:Keps def} yields the formula
\begin{equation}
\label{eq:Keps Gauss expl} K_{\epsilon}^{L,S} (\theta;\lambda)=
\frac{1}{2\epsilon} \int \chi_{[-1,1]}\bigg(\frac{\langle
b_{1}^{L,S}(\theta;\lambda),\, a\rangle}{\epsilon}\bigg) |\langle
b_{2}^{L,S}(\theta;\lambda),\, a \rangle| \exp{(-\frac{1}{2}\|a
\|^2)}\frac{da_1 \cdots da_N}{(2\pi)^{N/2}}
\end{equation}
We change the variables $v= aB$. In the new coordinates, we have
\begin{equation*}
\begin{split}
\|a\|^2 &= a\cdot a^{t} = v B^{L,S}(\theta;\lambda)^{-1}
(B^{L,S}(\theta;\lambda)^{-1})^{t} v^{t} \\&= v \left(\begin{matrix}
C^{L,S}(\theta;\lambda) &0
\\0 &I_{N-2}\end{matrix}\right)^{-1}v^{t} = w_1 C^{L,S}(\theta;\lambda)^{-1}
w_1^t+\|w_2\|^2,
\end{split}
\end{equation*}
where $w_1=(v_1,v_2)$ and $w_2 = (v_3,\ldots,\, v_{N})$, so that
\begin{equation*}
\begin{split}
K_{\epsilon}^{L,S}(\theta;\lambda)
&=\frac{1}{(2\pi)^{N/2}\sqrt{\det{C^{L,S}(\theta;\lambda)}}}\times\\&\times\int\limits_{\R^{N}}
\frac{1}{2\epsilon} \chi_{[-1,1]}\bigg(\frac{v_1}{\epsilon}\bigg)
|v_2| \exp\big(-\frac{1}{2} (w_1
C^{L,S}(\theta;\lambda)^{-1}w_1^{t}+\|w_2\|^2) \big) dv_1 \cdots
dv_{N}
\\& = \frac{1}{(2\pi)\sqrt{\det{C^{L,S}(\theta;\lambda)}}}\int\limits_{\R^2} \frac{1}{2\epsilon} \chi_{[-1,1]}\bigg(\frac{v_1}{\epsilon}\bigg)
|v_2| \exp\big(-\frac{1}{2} w_1 C^{L,S}(\theta;\lambda)^{-1}w_1^{t}
\big)dw_1 \times
\\&\times \int\limits_{\R^{N-2}} \exp\big(-\frac{1}{2} \|w_2\|^2
\big)\frac{dw_{2}}{(2\pi)^{\frac{N-2}{2}}}.
\end{split}
\end{equation*}
Note that the last integrand is just the standard multivariate
Gaussian probability measure, so that the corresponding integral is
just $1$.

Therefore, we get the formula
\begin{equation}
\label{eq:Keps comp}
\begin{split}
K_{\epsilon}^{L,S}(\theta;\lambda)
&=\frac{1}{2\pi\sqrt{\det{C^{L,S}(\theta;\lambda)}}}\times\\&\times\frac{1}{2\epsilon}\int\limits_{-\infty}^{\infty}\int\limits_{-\epsilon}^{\epsilon}
|v_2| \exp\big(-\frac{1}{2}
(v_{1},v_2)C^{L,S}(\theta;\lambda)^{-1}(v_{1},v_2)^{t} \big) dv_{1}
dv_2.
\end{split}
\end{equation}
We wish to apply the dominated convergence theorem again to exchange
the limit and the integral in \eqref{eq:exp N kern}. For this we
estimate $K_{\epsilon}^{L,S}(\theta;\lambda) $ from above in the
following way. Let $E=\{ e_{i} \}$ be any orthonormal basis of
$\R^{N}$, with $e_{1}=\frac{b_{1}}{\|b_{1}\|}$. Then for $a = Ea'$
the integral in \eqref{eq:Keps comp} is, using the invariance of the
Gaussian measure,
\begin{equation*}
\begin{split}
\frac{1}{2\epsilon}\int\limits_{\R^{N}} &\chi_{[-1,1]} \bigg(
\frac{\langle b_{1}^{L,S}(\theta,\lambda), a'_{1} e_{1}
\rangle}{\epsilon}\bigg) \big|\langle b_{2}^{L,S}(\theta,\lambda),
Ea' \rangle \big| \exp\bigg(-\frac{1}{2}\|a'\|^{2}\bigg)
\frac{da'}{(2\pi)^{N/2}}.
\end{split}
\end{equation*}
We have
\begin{equation*}
\langle b_{1}^{L,S}(\theta;\lambda), a'_{1} e_{1} \rangle = a'_{1}
\| b_{1}^{L,S}(\theta;\lambda) \| = a'_{1} \sqrt{c_{11}},
\end{equation*}
so that we integrate for $|a'_{1}| \le
\frac{\epsilon}{\sqrt{c_{11}^{L,S}(\theta;\lambda)}}$, and we have
by the Cauchy-Schwartz inequality
\begin{equation*}
\big|\langle b_{2}^{L,S}(\theta;\lambda), Ea' \rangle \big| \le \|b_{2}^{L,S}(\theta;\lambda)\| \cdot \| Ea'\| \sqrt{c_{22}(\theta;\lambda)} \cdot \| a' \|,
\end{equation*}
so that
\begin{equation*}
\begin{split}
K_{\epsilon}^{L,S}(\theta;\lambda) \le  \frac{
\sqrt{c_{22}^{L,S}(\theta;\lambda)}  }{2\epsilon}
&\int\limits_{-\frac{\epsilon}{\sqrt{c_{11}^{L,S}(\theta;\lambda)}}}^{\frac{\epsilon}{\sqrt{c_{11}^{L,S}(\theta;\lambda)}}}da'_{1}
\int\limits_{\R^{N-1}} \|a'\|e^{-\frac{1}{4}\|a'\|^2} \cdot
e^{-\frac{1}{4}\|a'\|^2} \frac{da'_{2}\ldots da'_{N}}{(2\pi)^{N/2}}
\\ \le C_{1}
\sqrt{\frac{c_{22}^{L,S}(\theta;\lambda)}{c_{11}^{L,S}(\theta;\lambda)}},
\end{split}
\end{equation*}
for some constant $C_{1} > 0$, since $xe^{-\frac{1}{4}x^2}$ is
bounded. Thus we have \begin{equation} \label{bound}
K_{\epsilon}^{L,S} (\theta;\lambda) \leq C_1 \,
\sqrt{\frac{c_{22}^{L,S}(\theta;\lambda)}{
c_{11}^{L,S}(\theta;\lambda)}} \leq C_2\lambda,
\end{equation}
for some constant $C_{2} >0$, where the last estimate in
\eqref{bound} follows from the asymptotics for the $c_{ij}$'s in
section 4 (in the long range case) and section 5 (in the short range
case). Thus by \eqref{eq:exp N kern} and the dominated convergence
theorem,
\begin{equation}
\label{eq:ELLL = int K} \E\numzeros^{L,S}(\lambda)
=\int\limits_{0}^{\ell} K^{L,S}(\theta;\lambda) d\theta,
\end{equation}
where
\begin{equation*}
K^{L,S}(\theta;\lambda) := \lim\limits_{\epsilon \rightarrow 0^+}
K_{\epsilon}^{L,S}(\theta;\lambda).
\end{equation*}
The fundamental theorem of the calculus and \eqref{eq:Keps comp}
then imply that
\begin{equation}
\label{eq:K comp}
\begin{split}
K^{L,S}(\theta;\lambda)
&=\frac{1}{2\pi\sqrt{\det{C^{L,S}(\theta;\lambda)}}}\int\limits_{-\infty}^{\infty}
|v_2| \exp\big(-\frac{1}{2}
(0,v_2)C^{L,S}(\theta;\lambda)^{-1}(0,v_2)^{t} \big) dv_2
\\&=\frac{1}{2\pi\sqrt{\det{C^{L,S}(\theta;\lambda)}}}\int\limits_{-\infty}^{\infty}
|v_2|
\exp(-\frac{1}{2}\frac{c_{11}^{L,S}(\theta;\lambda)}{\det{C^{L,S}(\theta;\lambda)}}v_{2}^2)dv_2
\\&=\frac{1}{2\pi}\frac{\sqrt{\det{C^{L,S}(\theta;\lambda)}}}{c_{11}^{L,S}(\theta;\lambda)}\int\limits_{-\infty}^{\infty}
|z| \exp(-\frac{1}{2} z^2)dz
\\&=\frac{1}{\pi}\frac{\sqrt{c_{11}^{L,S}(\theta;\lambda)c_{22}^{L,S}(\theta;\lambda)-c_{12}^{L,S}(\theta;\lambda)^2}}{c_{11}^{L,S}(\theta;\lambda)},
\end{split}
\end{equation}
since $\int\limits_{\R}\exp(-\frac{1}{2} z^2) |z| dz = 2.$
\end{proof}

\end{document}